\documentclass[12pt]{article}

\usepackage{sbc-template}
\usepackage{graphicx}
\usepackage{url}
\usepackage{newtxtext}
\usepackage[utf8]{inputenc}
\usepackage[english]{babel}
\usepackage{amsmath}
\usepackage{amsthm}
\usepackage{amssymb}
\usepackage{xspace}
\usepackage{colortbl}
\usepackage{cite}
\usepackage{complexity}
\usepackage{thmtools}
\usepackage{bbm}
\usepackage{bm}
\usepackage{indentfirst}
\usepackage[subrefformat=parens]{subcaption}
\usepackage{tikzset}
\usepackage[onelanguage,english,ruled,linesnumbered]{algorithm2e}
\usepackage[noabbrev,nameinlink]{cleveref}
\usepackage{placeins}

% Configurações referentes a algoritmos
\SetAlCapFnt{\footnotesize}

\SetCommentSty{mycommfont}

\DeclareMathOperator{\Net}{\mathcal{N}\xspace}

\newcommand{\nao}[1]{\overline{#1}}

\newcommand{\problem}[3]{
    \vspace{4pt}
    \arrayrulecolor{gray}
    \noindent
    \begin{tikzpicture}[node distance=14pt]
        \node (tb) [inner sep=4pt, rounded corners=2pt, fill=gray!20, draw=gray] {
            \begin{tabular}{p{0.1\textwidth}p{0.825\textwidth}}
                \textbf{Problem:}  & \textbf{\textsc{#1}} \\ \hline
                \textbf{Input:}    & #2 \\
                \textbf{Question:} & #3
            \end{tabular}
        };
    \end{tikzpicture}
    \arrayrulecolor{black}
    \vspace{4pt}
}

\newtheorem{proposition}{Proposition}[section]
\newtheorem{theorem}[proposition]{Theorem}
\newtheorem{lemma}[proposition]{Lemma}
\newtheorem{corollary}[proposition]{Corollary}

\declaretheoremstyle[%
    spaceabove=6pt,%
    spacebelow=6pt,%
    headfont=\normalfont\itshape,%
    postheadspace=1em,%
    qed=\qedsymbol%
]{shortproof-style}

\declaretheorem[name={Proof},style=shortproof-style,unnumbered]{shortproof}

\crefname{lemma}{Lemma}{Lemmas}
\crefname{theorem}{Theorem}{Theorems}
\crefname{corollary}{Corollary}{Corollaries}
\crefname{conjecture}{Conjecture}{Conjectures}
\crefname{definition}{Definition}{Definitions}
\crefname{property}{Property}{Properties}
\crefname{section}{Section}{Sections}
\crefname{subseciton}{Subsection}{Subsections}
\crefname{equation}{Equation}{Equations}
\crefname{figure}{Figure}{Figures}
\crefname{algorithm}{Algorithm}{Algorithms}
\crefname{table}{Table}{Tables}

\sloppy

\title{Minimum cost flow decomposition on arc-coloured networks}

\author{Cláudio Soares de Carvalho Neto\inst{1}, Ana Karolinna Maia\inst{1},\\
        Cláudia Linhares Sales\inst{1}, Jonas Costa Ferreira da Silva\inst{2}}

\address{
    Universidade Federal do Ceará\\
    Fortaleza-CE, Brasil.
    \email{claudio@lia.ufc.br, karolmaia@ufc.br, linhares@dc.ufc.br}
    \nextinstitute
    Universidade Federal do Amazonas\\
    Manaus-AM, Brasil
    \email{jonas.costa@icomp.ufam.edu.br}
}

\begin{document}
    \maketitle

    \begin{abstract}
        A \emph{network $\Net$} is formed by a (multi)digraph $D$ together with a \emph{capacity function} $u : A(D) \to \mathbbm{R}_+$, and it is denoted by $\Net = (D,u)$. A  \emph{flow on $\Net$} is a function $x: A(D) \to \mathbbm{R}_+$ such that $x(a) \leq u(a)$ for all $a \in A(D)$, and it is said to be $k$-splittable if it can be decomposed into up to $k$ paths \cite{BaierKSF}. We say that a flow is $\lambda$-uniform if its value on each arc of the network with positive flow value is exactly $\lambda$, for some $\lambda \in \mathbbm{R}_+^*$.

        Arc-coloured networks are used to model qualitative differences among different regions through which the flow will be sent \cite{Granata2013}. They have applications in several areas such as communication networks, multimodal transportation, molecular biology, packing etc. % Jonas sugeriu mover esse trecho para a introdução

        We consider the problem of decomposing a flow over an arc-coloured network with minimum cost, that is, with minimum sum of the cost of its paths, where the cost of each path is given by its number of colours. We show that this problem is $\NP$-Hard for general flows. When we restrict the problem to $\lambda$-uniform flows, we show that it can be solved in polynomial time for networks with at most two colours, and it is $\NP$-Hard for general networks with three colours and for acyclic networks with at least five colours.
    \end{abstract}

    \section{Introduction}
        Flows in networks are one of the most important tools to solve problems in graphs and digraphs. They constitute a generalization of some classical problems as shortest path and those related to finding internally arc-disjoint paths from a vertex to another. They 
        are widely studied as they allow, with a certain elegance and simplicity, modeling problems in different areas of study such as transportation, logistics and telecommunications. A long list of results related to flows can be found in \cite{FordFulkerson,Ahuja}. The simple theory combined with its applicability to real-life problems makes flows a very attractive topic to study.

        We use flows to model situations in which we need to represent some commodity moving from one part to the other of the network. The \emph{multi-commodity flow} problem asks for a flow that satisfies the demands for each commodity between a source and a destination, respecting the capacities of the arcs. The restriction to this problem where the demand for each commodity must be sent along a single path was proposed by \cite{Kleinberg96} and it is called the \emph{unsplittable flow} problem. The author showed that this problem contains a wide range of $\NP$-Complete problems, such as partitioning, scheduling, packing etc.

        A natural generalization of the unsplittable flow problem is to allow the demand of each commodity to be sent through a limited number of paths. This version is called \emph{splittable flow} problem and was introduced by \cite{BaierKSF}. Such problems arise, for example, in communication networks, where clients may demand connections with specific bandwidths between pairs of nodes. If these bandwidths are too high, it might be impossible for the network administrator to satisfy them unsplittably. On the other hand, the client may not wish to deal with many connections of small bandwidths. The flow can be viewed as a collection of paths along which an amount of the commodity is sent. Thus, we say that a flow is $k$-splittable if it can be decomposed into up to $k$ such paths. Each path may represent, for instance, a container associated with a route. Determining the minimum $k$ such that a given flow is $k$-splittable implies minimizing the number of containers needed to send the commodity.

        One may also consider to embed other kinds of structural attributes to the problem. As mentioned by \cite{Granata2013}, a considerable amount of work has been spent to face problems related to arc-coloured digraphs. They are used to model situations where it is crucial to represent qualitative differences among different regions of the graph itself. Each colour represents a different property (or set of properties). They have applications in several areas such as communication networks, multimodal transportation, packing among others. For example, in communication networks, colours may be used to model risks.

        According to \cite{CoudertSRRG2007}, in many situations it is needed to consider the correlations between arcs of the network, motivated by the network survivability concept of \emph{Shared Risk Resource Group} (SRRG). A SRRG is a set of resources that will break down simultaneously if a given failure occurs. One can then define the “safest” path as the one using the least number of groups. They are modelled by associating to each risk a colour, and to each resource an arc coloured by each of the colours representing the risks affecting it.

        %Na introdução, o Jonas sugeriu ligar melhor os parágrafos. Por exemplo, dizer que estamos unindo o limite de caminhos da decomposição, mas também estamos buscando que os caminhos da decomposição atendam outras restrições ou outros critérios de qualidade.

        The problem we propose in this work join aspects of the splittable flow problem and arc-coloured networks. For a given network and a flow in it, we want to decompose the flow with minimum cost, that is, with minimum sum of the cost of its paths, where the cost of each path is given by its number of distinct colours.

        For monochromatic networks, the problem consists of minimising the number of paths in the decomposition. In \cite{Hartman}, the authors showed that this problem is $\NP$-Hard for networks with three distinct flow values on the arcs and can be solved in polynomial time for networks with two distinct values and the smaller one divides the other. In this paper, we show that this problem can be solved in polynomial time for any two distinct flow values in acyclic networks.

        For bichromatic networks, we show that the problem can be solved in polynomial time when the flow is $\lambda$-uniform or when each colour is associated with a flow value and the smaller one divides the other. Unlike the monochromatic version, the problem remains open for the general case with two flow values.

        For networks with three colours, we prove that the problem is $\NP$-Hard even when the flow is uniform and the degree of each vertex, except for the source and sink, is at most $6$. When the problem is restricted to acyclic networks with at least five colours and uniform flows, it remains $\NP$-Hard. Therefore, the problem is difficult to solve for a small number of colours, even for simpler networks and flows.

        In \cref{sec:definitions} we give the basic notations and terminologies of the theory of network flows and formalize the proposed problem. In \cref{sec:complexity}, we make a study about the complexity of the problem for general networks and flows, and for some restricted cases.

    \section{Definitions and Terminology}\label{sec:definitions}
        We assume that the reader is familiar with the basic concepts in graph theory, specially with the notations for digraphs as in \cite{BondyMurty,BangJensen}.

        We denote by $D=(V,A,c)$ an arc-coloured (multi)digraph wit h vertex set $V$, arc set $A$ and an arc colouring $c: A(D) \rightarrow \{1, \ldots, p\}$. The colouring does not need to be proper, that is, two adjacent arcs may have the same colour. If every arc has the same colour, we omit the letter $c$ from the notation. We denote by $n_c(D)$ the number of distinct colours of a digraph $D$, by \emph{colours(D)} the image of $c$ and by \emph{span(i,D)} the subdigraph of $D$ induced by the arcs with colour $i$. The cardinalities of the sets $V$ and $A$ are referred respectively by $n$ and $m$.

        A \emph{network} $\Net$ is formed by a (multi)digraph $D = (V,A,c)$ with a \emph{capacity function} $u : A(D) \rightarrow \mathbbm{R}_+$, and it is denoted by $\Net = (D,u)$. We use the notation $\Net = (D,u \equiv \lambda)$ to say that $u(ij) = \lambda$, for every $ij \in A(D)$. For convenience, we will show the notation for digraphs, but it can be easily generalised to multidigraphs. A \emph{flow} in $\Net$ is a function $x: A(D) \to \mathbbm{R}_+$ such that $x(ij) \leq u(ij)$ for every arc $ij \in A(D)$. For the sake of simplicity, we may use $x_{ij}$, $u_{ij}$ and $c_{ij}$ to denote $x(ij)$, $u(ij)$ and $c(ij)$, respectively. We say that $x$ is an \emph{integer flow} if $x_{ij} \in \mathbbm{Z}_+$ for every arc $ij \in A(D)$. For some positive integer $\lambda$, we say that a flow $x$ is $\lambda$\emph{-uniform} if $x_{ij} \in \{0, \lambda\}$ for every arc $ij \in A(D)$. The \emph{support} of a network $\Net=(D,u)$ with respect to a flow $x$ is the digraph induced by the arcs of $D$ with positive flow value.

        With respect to a flow $x$ in a network $\Net = (D,u)$, for a vertex $v \in V(D)$, we define $x^+(v) = \sum_{vw \in A} x(vw)$ and $x^-(v) = \sum_{uv \in A} x(uv)$, that is, $x^+(v)$ is the amount of flow leaving $v$ and $x^-(v)$ is the amount of flow entering $v$. The \emph{balance} of a vertex $v$ in $x$ is defined by $b_x(v) = x^+(v)-x^-(v)$.

        Let $s$, $t$ be distinct vertices of a network $\Net = (D,u)$. An $(s,t)$-flow is a flow $x$ in which $b_x(s) = -b_x(t) = k$, for some $k \in \mathbbm{R}_+$, and $b_x(v) = 0$ for every vertex other than $s$ and $t$ (this is called \emph{Conservation Condition}). The value of such a flow is defined by $|x| = b_x(s)$. Usually, it is convenient to see an $(s,t)$-flow as a collection of paths from $s$ to $t$, where each of these paths is associated to an amount of flow that it carries.

        A \emph{path flow} in a network $\Net$ is a flow $x$ along a path $P$ such that $x_{ij} = r$ for every $ij \in A(P)$, for some positive value $r$. Analogously, we define a \emph{cycle flow} along a cycle $C$. A \emph{circulation} is a set of cycle flows. A classical result in the Network Flows Theory states the following:

        \vspace{10pt}
        \begin{theorem}[Flow Decomposition Theorem \cite{FordFulkerson}]\label{the:flowdec}
            Every $(s,t)$-flow $x$ in a network $\Net$ can be decomposed into at most $n+m$ path flows or cycle flows in $O(mn)$ time.
        \end{theorem}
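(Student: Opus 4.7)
The plan is a greedy decomposition: peel off an $s$-$t$ path flow while $|x|>0$, and then peel off cycle flows from the residual circulation, each extraction removing at least one arc from the support of $x$.

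First I would establish the two existence facts that drive the algorithm. When $|x|>0$, let $R$ be the set of vertices reachable from $s$ along arcs of the support; if $t\notin R$, then no positive-flow arc leaves $R$, so $\sum_{v\in R} b_x(v)\le 0$, but the only nonzero contribution on the left is $b_x(s)=|x|>0$, a contradiction. Hence $t\in R$ and an $s$-$t$ path in the support exists. When $|x|=0$ yet some arc still carries positive flow, every vertex has balance $0$, so a forward walk starting from any vertex with a positive-flow out-arc can never get stuck, and within $n$ steps it must revisit a vertex, exposing a cycle.

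Second, I would describe the greedy step: given a path $P$ (respectively cycle $C$) in the support, set $r := \min_{a\in P} x(a)$ and replace $x$ by $x$ minus the path flow (respectively cycle flow) of value $r$ along $P$ (along $C$). The Conservation Condition at internal vertices is preserved in both cases, and the bottleneck arc is driven to zero, so the support strictly shrinks by at least one arc. Iterating yields at most $m$ path/cycle flows in total, comfortably within the claimed $n+m$ bound.

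For the running time, I would implement the extractions by walking rather than by full graph search: start at $s$ (or, in the cycle phase, at any vertex incident to a positive-flow arc), greedily follow any positive-flow out-arc, and maintain a stack so that whenever the walk revisits a vertex we immediately peel off the cycle closed at that vertex and resume from its first occurrence on the stack. Each walk has length at most $n-1$ before it closes a cycle or reaches $t$, so one extraction costs $O(n)$. With $O(n+m)$ extractions, the total time is $O(n(n+m)) = O(nm)$ whenever $m\ge n-1$ (true on the connected support where the work actually occurs). The main obstacle is exactly this amortised analysis: a naive BFS-per-iteration implementation gives $O(m^2)$, and obtaining $O(nm)$ requires charging each unit of search work to a vertex of the path or cycle being extracted rather than to an arc of the whole network.
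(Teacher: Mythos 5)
The paper states this as a classical theorem of Ford and Fulkerson and gives no proof of its own, so there is nothing internal to compare against; your argument is the standard decomposition proof and it is correct. Both existence claims are sound (the cut/balance contradiction showing $t$ is reachable in the support while $|x|>0$, and the never-stuck walk exposing a cycle once $|x|=0$), the bottleneck subtraction kills at least one support arc per extraction so there are at most $m\le n+m$ terms, and the stack-based walk with on-the-fly cycle peeling is exactly the device that makes each extraction cost $O(n)$ and hence the total $O(nm)$. Two cosmetic remarks: since you already bound the number of extractions by $m$ rather than $n+m$, the closing hedge about needing $m\ge n-1$ is unnecessary; and when a cycle is peeled mid-walk it is worth stating explicitly that the stack prefix below the repeated vertex is untouched (none of its arcs lie on the extracted cycle), which is what licenses resuming the walk from that vertex.
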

        \vspace{6pt}

        Let $\Net$ be an arc-coloured network. Suppose that each colour is associated to a risk. Then, finding a path with a minimum number of colours corresponds to finding a safer path. In \cite{YuanMCPP}, the authors showed that this problem is $\NP$-Hard.

        As defined in \cite{BaierKSF}, an $(s,t)$-flow $x$ on a network $\Net$ is said to be $k$-splittable if it can be specified by $k$ pairs $(P_1,f_1), \ldots, (P_k,f_k)$, each one representing an $(s,t)$-path $P_i$ associated to a flow value $f_i$, for $1 \leq i \leq k$, that is, if it can be decomposed into up to $k$ path flows $x^1, \ldots, x^k$, such that $\sum_{i=1}^k |x^i| = |x|$. The paths do not need to be distinct. For each path extracted in the decomposition of \cref{the:flowdec}, the flow on at least one arc becomes zero. So, apart from the circulation, every $(s,t)$-flow is $m$-splittable. Therefore, we are interested in values of $k$ less then $m$.

        The problem we propose, referred as \textsc{MinCostCFD} (Min. Cost Coloured Flow Decomposition), consists of: given a network $\Net=(D,u)$, with $D=(V,A,c)$, and an $(s,t)$-flow $x$ on it, finding a decomposition of $x$ into $\ell$ path flows, $x^1, \ldots, x^\ell$, where each $x^i$ is sent along a directed path $P_i$, while minimizing the cost of the solution given by:
        \begin{equation}\label{eq:dfcol}
            n_c(P_1, \ldots ,P_\ell) = \sum_{i=1}^{\ell} n_c(P_i)
        \end{equation}

        Observe that, together with the path flows obtained on the decomposition as above, we may also have a circulation. However, it does not affect the cost of the decomposition.

        In a $\lambda$-uniform $(s,t)$-flow $x$, the number of paths of an optimal decomposition is $|x|/\lambda$. Even though the circulation can be removed from a flow without changing its value, it plays an important role on the cost of a decomposition. In the \cref{fig:mincolfd}, there is a $\lambda$-uniform flow $x$ of value $2\lambda$ on an arc-coloured network. The label on each arc represents its colour. If we remove the cycle $abcda$, there are two path flows along the bichromatic paths $sat$ and $sct$, and so the cost is $4$. We can get a decomposition of cost $2$ by taking two path flows along the monochromatic paths $sabct$ and $scdat$. This means that removing the circulation is not a good strategy for solving the problem.

                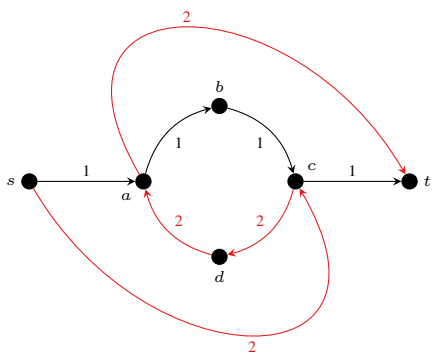
\begin{figure}[!ht]
            \centering
            \begin{tikzpicture}[>=stealth, auto=left, x=1.0cm, y=1.0cm]
                \tiny
                \clip (0.0, -0.3) rectangle (6, 4.3);
                \node [nodeb] (s) [label=180:$s$] at (0.5, 2.0) {};
                \node [nodeb] (a) [label=225:$a$] at (2.0, 2.0) {};
                \node [nodeb] (b) [label= 90:$b$] at (3.0, 3.0) {};
                \node [nodeb] (c) [label= 45:$c$] at (4.0, 2.0) {};
                \node [nodeb] (d) [label=270:$d$] at (3.0, 1.0) {};
                \node [nodeb] (t) [label=  0:$t$] at (5.5, 2.0) {};

                \path [->]
                    (s) edge [c09]                                       node{1} (a)
                        edge [c02, out=300, in=300, looseness=2.2, swap] node{2} (c)
                    (a) edge [c09, bend left, swap]                      node{1} (b)
                        edge [c02, out=120, in=120, looseness=2.2]       node{2} (t)
                    (b) edge [c09, bend left, swap]                      node{1} (c)
                    (c) edge [c02, bend left, swap]                      node{2} (d)
                        edge [c09]                                       node{1} (t)
                    (d) edge [c02, bend left, swap]                      node{2} (a);

            \end{tikzpicture}
            \caption{Influence of cycles on a decomposition of a $\lambda$-uniform flow.}
            \label{fig:mincolfd}
        \end{figure}

    \section{Complexity Results of \textsc{MinCostCFD}}\label{sec:complexity}

        In this section, we show that the \textsc{MinCostCFD} problem is $\NP$-Hard, for general networks and flows, and we show some complexity results to this problem when the number of colours and flow values on the arcs of the network are limited.

        We show the $\NP$-hardness of \textsc{MinCostCFD} by showing that its decision version, \textsc{KCostCFD}, is $\NP$-compete.
        We show a reduction from \textsc{3-Partition}. These problems are defined as follows:

        \problem{KCostCFD}
                {A network $\Net = (D,u)$, where $D=(V,A,c)$, an $(s,t)$-flow $x$ and $k \in \mathbbm{Z}_+^*$.}
                {Does $x$ admit a decomposition in $(s,t)$-path flows with cost at most $k$?}

        \vspace{-12pt}

        \problem{3-Partition}
                {A finite set $S=\{a_1, \ldots, a_{3r}\}$, a bound $T \in \mathbbm{N}$, and a value $v(a) \in \mathbbm{N}$,
                 such that $T/4 < v(a) < T/2$, for each $a \in S$, and $\sum_{i=1}^{3r} v(a_i) = rT$.}
                {Can $S$ be partitioned into $r$ subsets $S_1, \ldots, S_r$ such that, for $1 \leq i \leq r$, $\sum_{a \in S_i} v(a) = T$?}

        \vspace{-12pt}

        According to \cite{GareyJohson}, the \textsc{3-Partition} problem is strongly $\NP$-Complete. As observed by the authors, the constraints on the item values imply that every subset $S_i$ must have exactly three elements.

        \vspace{10pt}
        \begin{theorem}\label{the:kcostcfdnpc}
            The \textsc{KCostCFD} problem is $\NP$-Complete.
        \end{theorem}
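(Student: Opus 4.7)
The plan is to establish membership in $\NP$ via the Flow Decomposition Theorem and then to prove $\NP$-hardness through a direct reduction from \textsc{3-Partition}, essentially collapsing our problem to ``minimum number of paths in a decomposition'' by using a single colour.

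For $\NP$ membership, I would invoke \cref{the:flowdec}: any $(s,t)$-flow decomposes into at most $n+m$ path or cycle flows, each of polynomial description, so such a decomposition serves as a polynomial-size certificate whose cost can be computed in polynomial time by summing $n_c(P_i)$ over its paths.

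For hardness, given a \textsc{3-Partition} instance $(S,T,v)$ with $|S| = 3r$, I would build a multidigraph on three vertices $s, u, t$ as follows: for each item $a_i$, add an arc from $s$ to $u$ of capacity $v(a_i)$; for each of the $r$ intended subset slots, add an arc from $u$ to $t$ of capacity $T$. All $4r$ arcs receive the same colour, and the flow $x$ saturates every arc, so $x^+(s) = rT = x^-(t)$ and $b_x(u) = 0$. Set the budget to $k = 3r$. Because the resulting network is acyclic and monochromatic, every path flow in any decomposition traverses exactly one item arc and one slot arc, and carries cost exactly $1$; thus the cost of a decomposition equals its number of path flows.

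Correctness then splits into two directions. If \textsc{3-Partition} has a solution $S_1, \ldots, S_r$, routing each $a_i \in S_j$ through the item-$i$ arc and the slot-$j$ arc yields exactly $3r$ path flows of total cost $3r$. Conversely, any decomposition of cost at most $3r$ uses at most $3r$ paths; since each of the $3r$ item arcs carries positive flow and hence appears in at least one path, a counting argument forces each item's flow to be routed entirely through a single slot, after which the slot load of $T$ together with the constraint $T/4 < v(a_i) < T/2$ forces exactly three items per slot, reconstructing a valid partition. The only subtleties I expect are (i) confirming that no cycle flows can occur in a decomposition (immediate from acyclicity), and (ii) arguing the reduction is polynomial, which holds because \textsc{3-Partition} is strongly $\NP$-Complete and item values enter the construction as capacities and flow values in unary-polynomial size.
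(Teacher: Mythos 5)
Your proof is correct, but it reaches the theorem by a different route than the paper. For hardness, the paper also reduces from \textsc{3-Partition}, but it builds an $(r+1)$-coloured simple digraph (with intermediate vertices $a_i$, $b_j$ and a hub $q$, item arcs coloured $r+1$ and slot arcs coloured $j$) in which every $(s,t)$-path is bichromatic, and sets the budget to $k=6r$; you instead collapse the problem to the monochromatic case on a three-vertex multidigraph, where the cost of a decomposition is exactly its number of paths, and set $k=3r$. In effect you re-prove the $\NP$-hardness of \textsc{KSplittableFlow} (which the paper cites as known) and observe that \textsc{KCostCFD} contains it as the one-colour special case. Both converse arguments rest on the same counting: $3r$ item arcs each carry positive flow and lie on arc-disjoint $(s,t)$-paths, so a budget of $3r$ paths forces each item to be routed unsplit into a single slot, and $T/4<v(a_i)<T/2$ then yields the partition. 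Your version is more economical; the paper's buys a construction without parallel arcs (yours needs the multidigraph convention the definitions do permit, or a trivial subdivision of each arc) and already exhibits hardness with several colours present. Two small cautions: your appeal to \cref{the:flowdec} for $\NP$-membership is slightly loose, since that theorem guarantees that \emph{some} decomposition has at most $n+m$ parts, not that a witnessing decomposition of cost at most $k$ does; one should add that duplicate paths may be merged without increasing the cost and that the optimum cost is at most $(n+m)\cdot n_c(D)$, so a polynomial-size certificate always exists (the paper glosses over this point as well). Also, strong $\NP$-completeness of \textsc{3-Partition} is not actually needed for your reduction to be polynomial, since the values are simply copied into capacities in whatever encoding they are given; it does no harm to invoke it, but it is not the load-bearing step.
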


        \begin{shortproof}
            First we show that \textsc{KCostCFD} is in $\NP$. For a given $(s,t)$-flow $x$ and a sequence of $(s,t)$-path flows $x^1, \ldots, x^\ell$, one can verify in polynomial time if $\textstyle \sum_{i=1}^{\ell} x_a^i = x_a$, for every arc $a$, where each $x^i$ is sent through a path $P_i$, and if $\textstyle \sum_{i=1}^{\ell} n_c(P_i) \leq k$.

            Now we show a reduction from \textsc{3-Partition} to \textsc{KCostCFD} problem. For a given set $S=\{a_1, \ldots, a_{3r}\}$ and a bound $T$, where $T/4 < v(a_i) < T/2$ and $\sum_{i=1}^{3r} v(a_i) = rT$, instance of \textsc{3-Partition}, we define $k=6r$ and build a network $\Net=(D,u)$, with $D=(V,A,c)$, and define an $(s,t)$-flow $x$ in it with value $rT$ as follows:

            \begin{itemize}
                \item $V = \{a_1, \ldots, a_{3r}, b_1, \ldots, b_r, q, s, t\}$;
                \item $A = \{sa_i, a_iq \mid 1 \leq i \leq 3r\} \cup \{qb_j, b_jt \mid 1 \leq j \leq r\}$;
                \item we define $u_{sa_i} = u_{a_iq} = v(a_i)$, e $u_{qb_j} = u_{b_jt} = T$;
                \item we set $x_{ij} = u_{ij}$ for every $ij \in A$ (this is possible, because $\textstyle \sum_{i=1}^{3r} v(a_i) = rT$);
                \item finally, we define the colouring $c_{sa_i} = c_{a_iq} = r+1$ e $c_{qb_j} = c_{b_jt} = j$.
            \end{itemize}

            The network described above has $4r+3$ vertices and $8r$ arcs. So, the construction is done in polynomial time in the input size and it is illustrated in \cref{fig:red3partkcostcfd} (the label on each arc indicates its colour).

                    \begin{figure}[!ht]
            \centering
            \begin{tikzpicture}[>=stealth, auto=left]
                \tiny
                \node [nodeb] (s)  [label=180:$s$]      at (0.0, 2.00) {};
                \node [nodeb] (a1) [label= 90:$a_1$]    at (2.0, 4.00) {};
                \node [nodeb] (a2) [label= 90:$a_2$]    at (2.0, 3.00) {};
                \node [nodeb] (a3) [label= 90:$a_3$]    at (2.0, 2.00) {};
                \node         (r1)                      at (2.0, 1.40) {\normalsize $\vdots$};
                \node [nodeb] (a4) [label=270:$a_{3r}$] at (2.0, 0.50) {};
                \node [nodeb] (q)  [label=  0:$q$]      at (4.0, 2.00) {};
                \node [nodeb] (b1) [label= 90:$b_1$]    at (6.0, 2.80) {};
                \node         (r2)                      at (6.0, 2.10) {\normalsize $\vdots$};
                \node [nodeb] (b2) [label=270:$b_r$]    at (6.0, 1.20) {};
                \node [nodeb] (t)  [label=  0:$t$]      at (8.0, 2.00) {};

                \path [->]
                    (s)  edge [bend left=30]          node {r+1} (a1)
                         edge [bend left=20, pos=0.7] node {r+1} (a2)
                         edge                         node {r+1} (a3)
                         edge [bend right=30, swap]   node {r+1} (a4)
                    (a1) edge [bend left=30]          node {r+1} (q)
                    (a2) edge [bend left=20, pos=0.3] node {r+1} (q)
                    (a3) edge                         node {r+1} (q)
                    (a4) edge [bend right=30, swap]   node {r+1} (q)
                    (q)  edge [bend left=20]          node {1}   (b1)
                         edge [bend right=20, swap]   node {r}   (b2)
                    (b1) edge [bend left=20]          node {1}   (t)
                    (b2) edge [bend right=20, swap]   node {r}   (t);
            \end{tikzpicture}
            \caption{Reduction from \textsc{3-Partition} to \textsc{kCostCFD}.}
            \label{fig:red3partkcostcfd}
        \end{figure}
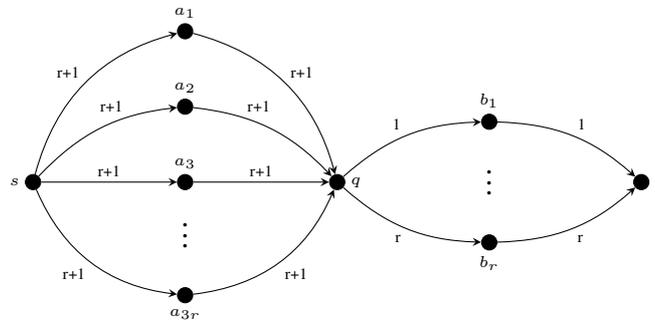

            Now we show that the answer to \textsc{3-Partition} is yes if and only if the $(s,t)$-flow $x$ admits a decomposition into path flows with cost at most $k$.

            If the answer to \textsc{3-Partition} is yes, the flow can be decomposed into $3r$ path flows of cost $2$, each, as follows: let $\{S_1, \ldots, S_r\}$ be a solution to \textsc{3-Partition} problem. For each $a_i \in S_j$, with $1 \leq j \leq r$, there exists a path flow through $sa_iqb_jt$ with value $v(a_i)$, using two colours, $j$ and $r+1$, in the network. Thus, there exists a decomposition of the flow $x$ with cost $6r$.

            Now assume that the $(s,t)$-flow $x$ defined in the network $\Net$ above can be decomposed into path flows with total cost at most $k=6r$. In fact, this cost cannot be less than $6r$, as there are exactly $3r$ arc-disjoint paths from $s$ to $q$, and every path from $s$ to $t$ has two colours. Therefore, $3r$ bichromatic $(s,t)$-path flows are needed. Each vertex $a_i$, for $1 \leq i \leq 3r$, must be in exactly one path flow, which value is $v(a_i)$. Note that there are exactly three path flows through each vertex $b_j$, for $1 \leq j \leq r$, due to the restriction on the values of $a_i$ $(T/4 < v(a_i) < T/2)$. Let them be $sa_xqb_jt$, $sa_yqb_jt$ and $sa_zqb_jt$. For every three of these path flows, we construct a subset $S_j = \{a_x, a_y, a_z\}$ such that $v(a_x) + v(a_y) + v(a_z) = T$.
        \end{shortproof}

        When dealing with an $\NP$-Hard problem, a natural approach is to impose some restrictions to the input so that we can find polynomial-time algorithms for special cases of the problem at hand. That is what we do in the following subsections.

        \subsection{Monochromatic Networks}

            When all arcs on the network have the same colour, the problem consists of determining the smallest number of paths in which the flow can be decomposed. This is the \textsc{MinSplittableFlow} problem and its decision version, the \textsc{KSplittableFlow}, is defined bellow.

            \problem{KSplittableFlow}
                    {A network $\Net=(D,u)$, with $D=(V,A)$, an $(s,t)$-flow $x$ and $k \in \mathbbm{Z}_+^*$.}
                    {Is $x$ $k$-splittable?}

            \vspace{-12pt}

            According to \cite{Vatinlen2008}, the \textsc{KSplittableFlow} is $\NP$-Complete. Let us look at some cases regarding the number of distinct flow values in the arcs of the network.

            \vspace{10pt}
            \begin{lemma}[\cite{Hartman}]\label{lem:ksplitmult}
                Let $\Net$ be a network in which all capacities are multiples of $\lambda$, and let $x$ be a maximum $(s,t)$-flow in $\Net$. Then, $|x|$ is multiple of $\lambda$ and can be decomposed into exactly $|x|/\lambda$ $(s,t)$-path flows of value $\lambda$.
            \end{lemma}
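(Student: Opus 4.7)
The plan is to split the argument into two parts: first establishing that $|x|$ is a multiple of $\lambda$, and then constructing the desired decomposition into path flows of value $\lambda$.

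For the first part, I would scale the entire network by $1/\lambda$, producing a network $\Net' = (D, u/\lambda)$ with integer capacities. The classical integrality theorem for maximum flows guarantees that the max-flow value in $\Net'$ is an integer; since scaling preserves the max-flow value up to a factor of $\lambda$, we conclude that $|x|$ is an integer multiple of $\lambda$.

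For the second part, I would first invoke the same scaling argument to produce a maximum flow in $\Net$ whose value on every arc is a multiple of $\lambda$; we may assume $x$ itself has this property, since otherwise we can replace it by such an equivalent max flow without affecting its value. I would then build the decomposition iteratively: while $b_x(s) > 0$, I would find a directed $(s,t)$-path $P$ in the support of the current flow, subtract $\lambda$ from the flow value on every arc of $P$, and record $P$ (carrying $\lambda$) as one of the path flows. Because all arc values are non-negative multiples of $\lambda$, they remain non-negative multiples of $\lambda$ after subtraction, so the procedure is well defined. Each iteration decreases $|x|$ by exactly $\lambda$ and produces one path flow of value $\lambda$, yielding exactly $|x|/\lambda$ path flows upon termination.

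The main obstacle is guaranteeing that an $(s,t)$-path is always available in the support whenever $b(s) > 0$. This reduces to a standard reachability argument using flow conservation: starting from $s$ along some outgoing arc with positive flow, each intermediate vertex $v$ must admit an outgoing arc with positive flow, since $b_x(v) = 0$ forces outgoing positive flow whenever there is incoming positive flow; hence any such walk either reaches $t$ or closes a cycle. In the latter case I would subtract the minimum arc value along the cycle (which is a multiple of $\lambda$) to eliminate that circulation without changing $|x|$ or disturbing the "multiples of $\lambda$" property, and then restart the search. Since each such cycle removal strictly reduces the support, the process terminates with an $(s,t)$-path, making the iterative extraction well defined.
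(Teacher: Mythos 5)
Your argument is correct, and in fact the paper offers no proof to compare it against: \cref{lem:ksplitmult} is imported from \cite{Hartman} and stated without proof, so your write-up supplies the standard justification (scale by $1/\lambda$, invoke integrality of the maximum flow value, then run the usual path/cycle extraction of \cref{the:flowdec} in units of $\lambda$). One point worth highlighting as a genuine contribution of your proof rather than a redundancy: the step where you replace $x$ by an equivalent maximum flow whose arc values are all multiples of $\lambda$ is not optional. Taken verbatim, the lemma is false for an \emph{arbitrary} maximum flow --- e.g.\ with $V=\{s,a,b,c,d,t\}$, arcs $sa,sb,ct,dt$ of capacity $\lambda$ and $ac,ad,bc,bd$ of capacity $\lambda$, the flow putting $\lambda$ on the outer arcs and $\lambda/2$ on each middle arc is a maximum flow of value $2\lambda$ whose only path decomposition uses four paths of value $\lambda/2$, with no circulation to blame. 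So the statement must be read as asserting the existence of \emph{some} maximum flow of that value admitting the decomposition (which is also how the paper uses it inside \cref{alg:decfluxo2v}), and your replacement step is exactly the repair that makes the reading rigorous. The rest --- non-negativity of $\lambda$-multiples under subtraction, the conservation-based reachability argument, and the cycle-removal termination via strict shrinkage of the support --- is sound.
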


            If an $(s,t)$-flow $x$ in a network $\Net$ is $\lambda$-uniform, then by \cref{lem:ksplitmult} the minimum number of paths is $|x|/\lambda$.

            Given a network $\Net=(D,u)$, with $D=(V,A)$, a flow $x$ and an integer $a$, we define the $support(\Net, x, a)$ operation that returns a network $\Net'=(D',u')$, with $D'=(V,A')$ and $A'=\{ij \in A \mid x_{ij} \geq a\}$, and $u'_{ij} = x_{ij}$ for every $ij \in A'$.

            According to \cite{Hartman}, given a network $\Net$ and an $(s,t)$-flow $x$ such that $x_{ij} \in \{a,b\}$ for every arc $ij$ in the network, and $b$ divides $a$, it is possible to find the optimal solution for the \textsc{MinSplittableFlow} problem under these conditions using the following algorithm:

            \begin{enumerate}
                \item [(i).]   Do $\Net_a = support(\Net,x,a)$ and calculate a maximum $(s,t)$-flow $x_a$
                               in $\Net_a$, which is decomposed into $p_1$ paths of value $a$;
                \item [(ii).]  Obtain a flow $x'$ in $\Net$ by decreasing the value $x_a$ from $x$ (arc by arc);
                \item [(iii).] Do $\Net_b = support(\Net,x',b)$ and calculate a maximum $(s,t)$-flow $x_b$
                               which is decomposed into $p_2$ paths of value $b$.
            \end{enumerate}

            In step $(i)$, it is possible to obtain $p_1$ using \cref{lem:ksplitmult}. In step $(iii)$, since all flow values $x_b$ are multiples of $b$, it is also possible to obtain $p_2$ using \cref{lem:ksplitmult}.

            \vspace{10pt}
            \begin{theorem}[\cite{Hartman}]\label{the:ksplit2v}
                Consider a network $\Net$ and an $(s,t)$-flow $x$ in it, such that there are only two distinct flow values $a$ and $b$ on the arcs, and $b \mid a$ ($b$ divides $a$). The solution produced by the above algorithm is optimal.
            \end{theorem}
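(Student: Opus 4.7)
The plan is to match the algorithm's output, $p_1 + p_2$, against a lower bound that holds for every decomposition of $x$ into path flows. I will first rewrite the algorithm's count in a clean form. Let $F_a = |x_a|$ denote the maximum $(s,t)$-flow value in $\Net_a$. Applying \cref{lem:ksplitmult} to $\Net_a$ (whose capacities are all equal to $a$) gives $p_1 = F_a/a$. After step (ii) the residual $x' = x - x_a$ is a feasible $(s,t)$-flow of value $|x| - F_a$ whose arc values lie in $\{0, a, b\}$, and $x'$ itself saturates every capacity of $\Net_b$; hence any maximum flow $x_b$ in $\Net_b$ has value $|x| - F_a$, and since every capacity in $\Net_b$ is a multiple of $b$, \cref{lem:ksplitmult} again yields $p_2 = (|x|-F_a)/b$. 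Thus the algorithm returns a valid decomposition of $x$ into $F_a/a + (|x|-F_a)/b$ path flows (plus, possibly, an unused circulation, which does not count).

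For the lower bound I will take an arbitrary decomposition of $x$ into $p$ path flows with values $v_1, \ldots, v_p$ and partition its paths into \emph{A-paths}, using only arcs of $x$-value $a$, and \emph{B-paths}, using at least one arc of $x$-value $b$. Every A-path satisfies $v_i \leq a$ and every B-path satisfies $v_i \leq b$, so writing $F_A$ for the total flow carried by the A-paths one has $p \geq F_A/a + (|x|-F_A)/b$. The key observation is that the A-paths jointly form a feasible $(s,t)$-flow in $\Net_a$: every arc they use has $x$-value $a$, so it lies in $\Net_a$, and their combined value on any such arc is at most $x_{ij}=a$, the capacity of $ij$ in $\Net_a$. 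Hence $F_A \leq F_a$. Since $b \mid a$ and $a \neq b$ force $a > b$, the quantity
\[
\frac{F_A}{a} + \frac{|x|-F_A}{b} \;=\; F_A \left( \frac{1}{a} - \frac{1}{b} \right) + \frac{|x|}{b}
\]
is strictly decreasing in $F_A$; combined with $F_A \leq F_a$ this gives $p \geq F_a/a + (|x|-F_a)/b = p_1 + p_2$, matching the algorithm.

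The single substantive step is the feasibility observation $F_A \leq F_a$ --- without it nothing forces the lower bound. I do not expect any real obstacle beyond this: circulations in the given decomposition carry no net $(s,t)$-value and are irrelevant to the path count, path values need not be integer multiples of $b$ (the argument works for arbitrary nonnegative $v_i$ via the linear quantity $F_A$), and everything else is arithmetic on $1/a$ and $1/b$ together with two appeals to \cref{lem:ksplitmult}.
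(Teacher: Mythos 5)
The paper does not actually prove this statement---\cref{the:ksplit2v} is imported from \cite{Hartman} as a black box---so there is no in-paper argument to compare yours against, and your write-up is a genuine addition. Your lower bound is correct and is the real content of the theorem: partitioning an arbitrary decomposition into paths that use only value-$a$ arcs and paths that touch a value-$b$ arc, bounding the former's total value $F_A$ by the maximum flow value $F_a$ of $\Net_a$ (they form a feasible flow there), and using that $F_A/a+(|x|-F_A)/b$ is decreasing in $F_A$ because $a>b$, is exactly the right exchange argument; it also correctly tolerates a leftover circulation in the competing decomposition.

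The one genuine gap sits in the step you dismissed as routine: ``$x'$ saturates every capacity of $\Net_b$; hence any maximum flow in $\Net_b$ has value $|x|-F_a$.'' A flow that saturates every arc need not be maximum when saturated arcs enter $s$ (or leave $t$): if $x'$ consists of an arc $s\to t$ and an arc $t\to s$ each carrying $b$ (which arises, e.g., from an instance that additionally has a parallel $s\to t$ arc of value $a$ removed in step (i)), then $|x'|=0$ while the maximum $(s,t)$-flow in $\Net_b$ has value $b$. In such cases $p_2>(|x|-F_a)/b$ and, worse, $x'-x_b$ is not a circulation, so the algorithm's output is not even a decomposition of $x$; your upper bound then does not match your lower bound. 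The cheapest repair is to bypass the maximum-flow computation in step (iii): $x'$ is itself an $(s,t)$-flow of value $|x|-F_a$ all of whose arc values are multiples of $b$, so \cref{the:flowdec} decomposes it into $(s,t)$-path flows with values that are multiples of $b$ plus cycle flows, and splitting each path of value $kb$ into $k$ copies of value $b$ yields exactly $(|x|-F_a)/b$ paths together with a genuine circulation. Alternatively, assume (as is implicit in the algorithm) that no residual flow enters $s$, in which case the cut around $s$ in $\Net_b$ has capacity exactly $|x'|$ and your saturation claim becomes valid. With either fix, your argument is complete.
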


            We considered the case in which the network has at most two distinct flow values on its arcs, let us say $a$ and $b$ and assume that $a > b$, and one is not necessarily a multiple of the other. We proposed the \cref{alg:decfluxo2v}, based on the Euclide's algorithm for computing the greatest common divisor (GCD) of two positive integers. It gets as input a network $\Net$, two vertices $s$ and $t$, an $(s,t)$-flow $x$, two positive integers $a$ and $b$, and returns an array $P$ with the number of path flows on each iteration of the decomposition of $x$.

            Notice that, after each iteration of the command \emph{while} of the \cref{alg:decfluxo2v}, the value of $a$ is reduced to at most a half of it. To see this, we must consider two cases: if $b \leq a/2$, then $a~mod~b < b \leq a/2$; otherwise, $a~mod~b = a-b < a/2$.

            The values of $a$ and $b$ are reduced, alternately, to the half of their previous values. Then, the number of iterations done by the algorithm is $O(\log a + \log b)$. The complexity of the algorithm depends on the complexity of finding a maximum $(s,t)$-flow. This can be done in polynomial-time (see some algorithms to this purpose in \cite{Ahuja}).

            \begin{algorithm}
                \footnotesize\itshape
                \caption{\footnotesize FlowDecomposition2V($\Net,s,t,x,a,b$)}
                \label{alg:decfluxo2v}
                \Begin{
                    $i \leftarrow 1$\;
                    \While{($b > 0$)}{
                        $\Net_i \leftarrow support(\Net, x, a)$\;
                        Update the capacity of each arc $ij$ in $\Net_i$ to $\lfloor x_{ij} / a \rfloor \cdot a$\;
                        Calculate a max. $(s,t)$-flow $x'$ in $\Net_i$\;
                        $P[i] \leftarrow |x'| / a$\;
                        $x \leftarrow x - x'$\tcp*[l]{Update the flow $x$ on arcs with $x'>0$}
                        $r \leftarrow a \mod~b$\;
                        $a \leftarrow b$\;
                        $b \leftarrow r$\;
                        $i \leftarrow i + 1$\;
                    }
                    $\Net_i \leftarrow support(\Net, x, a)$\;
                    Calculate a max. $(s,t)$-flow $x'$ in $\Net_i$\;
                    $P[i] \leftarrow |x'| / a$\;
                    \Return{$P$}\;
                }
            \end{algorithm}

            \FloatBarrier

            Let $P = \langle p_1, \ldots, p_m \rangle$ be the array returned by \cref{alg:decfluxo2v}, and $a_i$ the value of $a$ at the iteration $i$. Remark that $a_1 > \cdots > a_m$ and each $p_i$ is obtained from a maximum flow in a network in which the capacity of each arc is at least $a_i$. As $a_m | a_{m-1}$, the flow $x$ may be decomposed into $\textstyle \sum_{i=1}^{m} p_i$ path flows.

            To calculate $p_1$, every arc in the network has capacity $a_1 = a$. For $i > 1$, the capacity values on the arcs must be in $\{a_i, \lfloor a_{i-1} / a_i \rfloor \cdot a_i\}$, that is, there are at most two distinct flow values on the arcs and the smallest one divides the other, and there are no path flows of value greater than $a_i$. It is possible to get $p_i$ in polynomial-time (see \cite{Hartman}). Furthermore, by the conservation condition, after calculating $p_i$, all arcs with capacity greater than $a_i$ must have been used by $(s,t)$-path flows of value $a_i$.

            The decomposition produced by \cref{alg:decfluxo2v} is illustrated in \cref{fig:flow2v1c}. The label on each arc represents its colour. We have in \subref{sfg:flow2v1ca} the original network and a flow of value $35$. In \subref{sfg:flow2v1cb}, the network $N_1$ with arcs whose flow value is at least $a_1 = 7$. In \subref{sfg:flow2v1cc}, the network $N_2$ with arcs whose flow value is at least $a_2 = 5$. Note that the arcs with a flow value $7$ in $N_1$ were adjusted to $\lfloor 7 / 5 \rfloor \cdot 5 = 5$. In \subref{sfg:flow2v1cd}, the network $N_3$ with arcs whose flow value is at least $a_3 = 2$. Similarly to the previous case, the arcs with a flow value $5$ in $N_2$ were adjusted to $\lfloor 5 / 2 \rfloor \cdot 2 = 4$. Finally, in \subref{sfg:flow2v1ce}, we have the network $N_4$ with arcs whose flow value is at least $1$. The number of paths in this decomposition is given by $p_1 + p_2 + p_3 + p_4 = 0 + 5 + 4 + 2 = 11$.

            \input{fig_flow2v1c}

            The \cref{alg:decfluxo2v} also works when $a = b$. In this case, $x$ is an $a$-uniform $(s,t)$-flow and the returned array is $P = \langle p_1 \rangle$, where $p_1 = |x| / a$. If $a > b$ and $b | a$, the result produced by this algorithm is optimal. This corresponds to the result showed by \cite{Hartman} to this case. The following results are related to the case in which $a > b$ and $b \nmid a$, and the network is defined over an acyclic digraph (DAG).

            \vspace{10pt}
            \begin{lemma}\label{lem:mincostcfd2v}
                Let $\Net$ be a network defined on a $DAG$, and $a$ and $b$ the distinct flow values in its arcs, with $a > b$ and $b \nmid a$. Let $P = \langle p_1, \ldots, p_m \rangle$ be the array returned by \cref{alg:decfluxo2v}. After calculating each $p_i$, there are at most two distinct flow values on the arcs of $\Net$.
            \end{lemma}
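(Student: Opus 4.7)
The plan is induction on $i$, showing that at the start of iteration $i$ the non-zero flow values on $\Net$ form a set $\{A_i, B_i\}$ with $A_i \geq B_i$, and that the algorithm's threshold variable satisfies $a = A_i$ when $i = 1$ and $a = B_i$ when $i \geq 2$. The base case $i = 1$ is immediate: $\Net_1$ consists only of the $A_1$-arcs, each with capacity $A_1$, so by \cref{lem:ksplitmult} the max flow decomposes into paths of value $A_1$ that fully saturate the arcs they use, while $B_1$-arcs are untouched, leaving non-zero values in $\{A_1, B_1\}$.

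For $i \geq 2$, $\Net_i$ contains every non-zero arc, with $A_i$-arcs of capacity $qB_i$ (where $q = \lfloor A_i / B_i \rfloor$) and $B_i$-arcs of capacity $B_i$; by \cref{the:ksplit2v}, a max flow $x^{(i)}$ in $\Net_i$ can be computed in polynomial time and decomposed into $p_i$ paths of value $B_i$. The lemma then reduces to the claim that $x^{(i)}$ can be chosen so that every $A_i$-arc is saturated at $qB_i$: under this choice, subtracting $x^{(i)}$ from $x$ leaves $A_i - qB_i = A_i \bmod B_i$ on each $A_i$-arc and $0$ or $B_i$ on each $B_i$-arc, so the non-zero values after iteration $i$ lie in $\{A_i \bmod B_i, B_i\}$.

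To exhibit such a max flow I would argue constructively. Define $\tilde{x}(e) = qB_i$ on every $A_i$-arc and $\tilde{x}(e) = B_i$ on every $B_i$-arc; this respects the capacities of $\Net_i$, and using the conservation identity $A_i(k^-(v) - k^+(v)) = B_i(j^+(v) - j^-(v))$ (where $k^\pm(v), j^\pm(v)$ count incident $A_i$- and $B_i$-arcs at $v$), the excess of $\tilde{x}$ at an internal vertex $v$ is $-(A_i \bmod B_i)(k^-(v) - k^+(v))$, always a multiple of $B_i$. Processing vertices of the DAG in reverse topological order, each excess can be cancelled by toggling an appropriate number of incident $B_i$-arcs from $B_i$ down to $0$; conservation of $x$ guarantees enough such $B_i$-arcs are present, and acyclicity prevents propagation from disturbing an already-repaired vertex. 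The resulting flow is feasible in $\Net_i$ and saturates every $A_i$-arc at $qB_i$; any residual gap to the true max flow value can then be filled by augmenting paths through $B_i$-arcs only (since the $A_i$-arcs are already saturated), producing a max flow with the required saturation, and by \cref{lem:ksplitmult} this yields the same $p_i$. The DAG hypothesis is essential: on a general digraph, cycles could absorb the excess $A_i \bmod B_i$ without saturating all $A_i$-arcs, producing three or more residual values and breaking the two-value invariant.
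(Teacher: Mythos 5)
Your induction invariant (the non-zero flow values after iteration $i$ form a two-element set $\{A_i,B_i\}$ following the Euclid recursion) is the same as the paper's, but the way you discharge the inductive step has a genuine gap. You reduce the step to the claim that a maximum flow in $\Net_i$ \emph{can be chosen} so as to saturate every $A_i$-arc at $qB_i$. That is not enough: \cref{alg:decfluxo2v} computes an arbitrary maximum flow, so the lemma needs that \emph{every} maximum flow of $\Net_i$ leaves each $A_i$-arc with residual exactly $A_i \bmod B_i$ (and each $B_i$-arc with $0$ or $B_i$). Exhibiting one good maximum flow does not rule out the algorithm returning a bad one. The paper closes exactly this hole with an exchange argument: if after iteration $i$ some arc still carried the old value, conservation of the leftover flow lets one prepend an $(s,\cdot)$-path flow and append a $(\cdot,t)$-path flow of value $B_i$ to a maximal path flow through that arc, yielding an $(s,t)$-path with residual capacity at least $B_i$ on every arc and contradicting the maximality of the computed flow; acyclicity is what makes this concatenation a genuine $(s,t)$-path. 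That argument establishes the property for all maximum flows at once.

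Your constructive existence argument is also shaky on its own terms. Processing in reverse topological order, toggling an \emph{outgoing} $B_i$-arc of $v$ changes the excess at its head, which has already been repaired; so your procedure only safely handles vertices with positive excess (too much inflow), while the excess $-(A_i\bmod B_i)\bigl(k^-(v)-k^+(v)\bigr)$ can have either sign. A negative excess forces you to lower outflow (inflow is already at capacity), which does disturb processed vertices; what is really needed is a global feasibility statement for the correction flow on the $B_i$-subgraph, not the local count you give. Finally, the closing step of augmenting "through $B_i$-arcs only" is unjustified: augmenting paths live in the residual digraph and may traverse saturated $A_i$-arcs backwards, de-saturating them. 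I recommend replacing the construction by the contradiction/augmenting-path argument above, which simultaneously yields existence and the required statement about every maximum flow.
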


            \begin{shortproof}
                Let $a_i$ be the value of $a$ used for calculating $p_i$ at each iteration $i$. Notice that each $p_i$ is obtained from a maximum flow in a network with a capacity at least $a_i$. Since $a > b$, note that $a_1=a$, $a_2=b$, and $a_i=a_{i-2} \mod a_{i-1}$, for $3 \leq i \leq m$, and that $a_m \mid a_{m-1}$ ($a_m$ divides $a_{m-1}$).

                We proceed by induction on the size of $P$. Initially, the flow values on the arcs are $a_1$ and $a_2$. After calculating $p_1$, the flow value in each arc is in $\{a_1, a_2\}$. Similarly, after calculating $p_2$, the flow value in each arc is in $\{a_2, a_3\}$. For $k < m$,  we suppose that, after calculating $p_k$, there are at most two flow values on the arcs of the network which are in $\{a_k, a_{k+1}\}$. We show that this also holds for the calculation of $p_{k+1}$ for flow values in $\{a_{k+1}, a_{k+2}\}$.
                
                For $k+1 < m$, suppose that there exists at least one arc $uv$ with a flow value $a_k$ after calculating $p_{k+1}$. Let $P'$ be a maximal $(s', t')$-path flow that contains $uv$ and has value $a_k$. Note that $s' \neq s$ or $t' \neq t$, as the value of $p_k$ is maximum. If $s' \neq s$, then since $b(s') = 0$, there are at least $a_k$ units of flow entering $s'$ through $(s, s')$-path flows of value $a_{k+1}$. Let $P_s'$ be one of these paths. Otherwise, if $s = s'$ then $P_s'$ is empty. Similarly, if $t' \neq t$, then since $b(t') = 0$, there are at least $a_k$ units of flow leaving $t'$ through $(t',t)$-path flows of value $a_{k+1}$. Let $P_t'$ be one of these paths. Otherwise, if $t = t'$ then $P_t'$ is empty. We can send $a_{k+1}$ units of flow along the $(s,t)$-path $P_s'P'P_t'$. This is a contradiction with the maximality of $p_{k+1}$.

                If $k+1 = m$, then $a_{k+1} \mid a_k$. All flow paths will have value $a_{k+1}$. After calculating $p_{k+1}$, the flow value in each arc is zero, and the result follows.

            \end{shortproof}

            By the proof of \cref{lem:mincostcfd2v}, when $a = a_i$, for $i > 1$, there are no $(s,t)$-paths with capacity $a_{i-1}$ in the network. Then, we have the following result:

            \vspace{10pt}
            \begin{theorem}\label{the:mincostcfd2v}
                If $\Net$ is acyclic, the solution given by the \cref{alg:decfluxo2v} is optimal.
            \end{theorem}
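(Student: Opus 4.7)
I would proceed by induction on the length $m$ of the Euclidean sequence $a_1 > a_2 > \cdots > a_m$ associated with the two flow values $a$ and $b$. The base case ($m = 2$, i.e., $b \mid a$) follows directly from \cref{the:ksplit2v}.

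For the inductive step, assume $m \geq 3$, let $\mathcal{D}^*$ be any decomposition of $x$ into $\ell^*$ path flows, and let $y''$ denote the flow that remains after iterations~$1$ and~$2$ of \cref{alg:decfluxo2v}. By \cref{lem:mincostcfd2v}, $y''$ has values in $\{a_2, a_3\}$ and its Euclidean sequence has length $m - 1$; the inductive hypothesis therefore yields $\sum_{i=3}^m p_i$ as the minimum number of path flows decomposing $y''$. It thus suffices to establish the inequality $\ell^* \geq p_1 + p_2 + \ell^{*(y'')}$, where $\ell^{*(y'')}$ is the minimum decomposition size of $y''$. This reduces to an exchange lemma: $\mathcal{D}^*$ can be rearranged, without increasing its path count, so that exactly $p_1$ of its paths have value $a_1$ (using only $a_1$-valued arcs and jointly forming a maximum flow in that subnetwork), a further $p_2$ paths have value $a_2$, and the remaining $\ell^* - p_1 - p_2$ paths decompose $y''$.

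The key argument runs as follows. The pure-$a_1$ sub-flow $y_a$ of $\mathcal{D}^*$ satisfies $|y_a| \leq p_1 a_1$; whenever this inequality is strict, the residual of the $a_1$-subnetwork with respect to $y_a$ contains an augmenting $s$-$t$ path, and since this subnetwork has uniform capacity $a_1$ and lies in a DAG, the augmentation can be carried out in a single unit of $a_1$, converting a combination of mixed paths of $\mathcal{D}^*$ into one additional pure-$a_1$ path of value $a_1$ without increasing the total count. Iterating saturates the subnetwork, and \cref{lem:ksplitmult} then decomposes the saturating flow into exactly $p_1$ paths of value $a_1$. Removing these, the observation preceding this theorem guarantees that the residual $y$ contains no $(s,t)$-path made only of $a_1$-arcs, so every remaining path has value at most $a_2$; the same exchange applied to iteration~$2$ extracts $p_2$ paths of value $a_2$ and leaves $\ell^* - p_1 - p_2$ paths decomposing $y''$, closing the induction.

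The main obstacle is the exchange lemma itself, and specifically checking that each augmentation in the $a_1$-subnetwork (and in the iteration-$2$ subnetwork) is realisable at the level of the decomposition without introducing extra paths. Acyclicity of $D$ enters in two essential places here: it prevents residual sub-flows from carrying cycles, so augmentations remain path-preserving, and it permits \cref{lem:ksplitmult} to be invoked on the saturating sub-flow, forcing its decomposition into exactly the prescribed number of uniform-value paths.
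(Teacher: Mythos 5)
Your overall strategy---greedy optimality via an exchange argument, organised as an induction on the length of the Euclidean sequence---is in the same spirit as the paper's proof, which argues that each $p_i$ comes from a maximum flow and that decreasing any $p_i$ by one unit would force at least $\lceil a_i/a_{i+1}\rceil \geq 2$ additional path flows in later iterations. The difference is one of packaging: the paper compares the algorithm's output against a hypothetical perturbation of itself, whereas you compare it against an arbitrary decomposition $\mathcal{D}^*$ and try to normalise $\mathcal{D}^*$ into the algorithm's form. Your framing is arguably the more honest one, since an arbitrary optimal decomposition need not contain any path of value exactly $a_i$ (in the example of \cref{fig:flow2v1c} one can build an $11$-piece decomposition containing only four pieces of value $5$ rather than $p_2=5$), so some transformation of $\mathcal{D}^*$ is genuinely required before a term-by-term comparison with $\langle p_1,\ldots,p_m\rangle$ makes sense.

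However, the proposal has a gap exactly where you flag it: the exchange lemma is asserted, not proved, and its proof is where essentially all of the difficulty of the theorem lives. Augmenting the pure-$a_1$ sub-flow $y_a$ along a residual path of bottleneck $a_1$ is fine as a statement about flows, but the claim that this augmentation can be mirrored at the level of the decomposition---re-splicing the mixed paths that cover the forward arcs of the augmenting path together with the pure path sitting on each backward arc---\emph{without increasing the number of paths} is not justified; the backward arcs force a genuine recombination of several paths of $\mathcal{D}^*$, and nothing in the sketch bounds the number of pieces this produces. The same issue recurs at iteration $2$, where arcs of value $a_1$ are only partially consumed (at capacity $\lfloor a_1/a_2\rfloor\cdot a_2$) and the bookkeeping is messier still; there is also a small off-by-one to settle, namely that running the algorithm afresh on $y''$ with values $\{a_2,a_3\}$ extracts nothing at level $a_2$, which is needed for the inductive hypothesis to return $\sum_{i=3}^{m}p_i$. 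Until the exchange lemma is proved, the induction does not close. To be fair, the paper's own proof is equally terse at this exact point---the assertion that losing one unit of $p_i$ costs at least two later paths is likewise not argued against arbitrary decompositions---so you have correctly isolated the step that needs real work; you just have not supplied it.
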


            \begin{shortproof}
                Let $P = \langle p_1, \ldots, p_m \rangle$, and let $a_i$ be the value of $a$ in the calculation of $p_i$, with $1 \leq i \leq m$. We know that $a_1 > a_2 > \ldots > a_{m-1} > a_m$ and besides $a_m | a_{m-1}$. Initially, there are only two distinct flow values -- $a$ and $b$ -- on the arcs of the network. After each iteration $i$, by \cref{lem:mincostcfd2v}. there are no more paths of capacity $a_{i-1}$. The number of path flows in the decomposition of $x$ given by \cref{alg:decfluxo2v} is $p_1 + \ldots + p_m$. This number is minimum, since each $p_i$ was obtained from a maximum flow of value $p_i \cdot a_i$. The decrease of one unit of any $p_i$ would imply the increase of at least $\lceil a_i/a_{i+1} \rceil \geq 2$ path flows in the subsequent iterations.
            \end{shortproof}

            The problem remains open when there are just two distinct flow values in the arcs of the network, and the digraph over which the network is defined is not acyclic.

            What if we have a network with more than two flow values in its arcs? In this case, the greedy approach of extracting path flows with higher values first does not work. We can see this in the example of \cref{fig:flow3v1c}. The label of each arc indicates its flow value. If the path flow of value $4$ is extracted, there will be $6$ path flows of value $1$ remaining, making a total of $7$ path flows. However, it is possible to initially extract $1$ path flow of value 2 (from the path flow of value $4$), $3$ path flows of value $2$ (starting from the arcs leading from $s$ with this value), and $2$ path flows of value $1$, making a total of $6$ path flows. To verify that this is minimal, note that to avoid the $6$ path flows of value $1$, it is necessary to pass $3$ path flows (two of value $1$ and one of value $2$) through at least one arc that has flow value $4$, particularly from $v_2$ to $v_3$ or from $v_4$ to $v_5$. As the out-degree of $s$ is $4$, there are at least $4$ $(s,t)$-path flows, and one of them must be split into $3$.

                        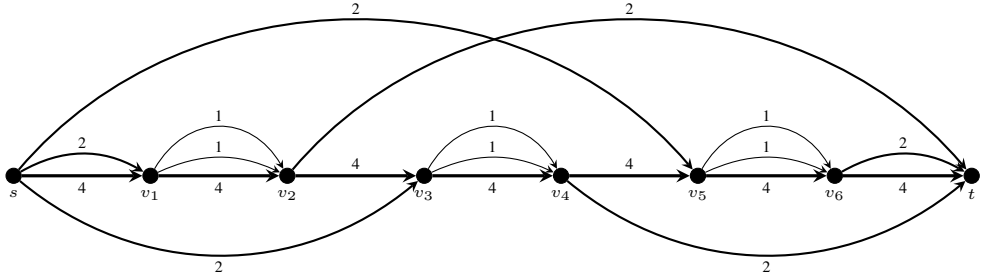
\begin{figure}[!htb]
                \centering
                \begin{tikzpicture}[>=stealth, auto=left, node distance=1.8cm]
                    \tiny
                    \node [nodeb] (s)  [label=270:$s$]                {};
                    \node [nodeb] (v1) [label=270:$v_1$, right of=s]  {};
                    \node [nodeb] (v2) [label=270:$v_2$, right of=v1] {};
                    \node [nodeb] (v3) [label=270:$v_3$, right of=v2] {};
                    \node [nodeb] (v4) [label=270:$v_4$, right of=v3] {};
                    \node [nodeb] (v5) [label=270:$v_5$, right of=v4] {};
                    \node [nodeb] (v6) [label=270:$v_6$, right of=v5] {};
                    \node [nodeb] (t)  [label=270:$t$,   right of=v6] {};

                    \path [->, very thick]
                        (s)  edge [swap] node{4} (v1)
                        (v1) edge [swap] node{4} (v2)
                        (v2) edge        node{4} (v3)
                        (v3) edge [swap] node{4} (v4)
                        (v4) edge        node{4} (v5)
                        (v5) edge [swap] node{4} (v6)
                        (v6) edge [swap] node{4} (t);

                    \path [->, thick]
                        (s)  edge [bend left=30]        node{2} (v1)
                             edge [bend right=40, swap] node{2} (v3)
                             edge [bend left=50]        node{2} (v5)
                        (v2) edge [bend left=50]        node{2} (t)
                        (v4) edge [bend right=40, swap] node{2} (t)
                        (v6) edge [bend left=30]        node{2} (t);

                    \path [->]
                        (v1) edge [bend left=25]                node{1} (v2)
                             edge [bend left=60, looseness=1.3] node{1} (v2)
                        (v3) edge [bend left=25]                node{1} (v4)
                             edge [bend left=60, looseness=1.3] node{1} (v4)
                        (v5) edge [bend left=25]                node{1} (v6)
                             edge [bend left=60, looseness=1.3] node{1} (v6);
                \end{tikzpicture}
                \caption{Network with 3 distinct flow values on the arcs.}
                \label{fig:flow3v1c}
            \end{figure}

            \FloatBarrier

            This difference between the optimal and greedy solutions can be arbitrarily large. To show this, we can generalise the network from \cref{fig:flow3v1c} to $2n+2$ vertices ($s$, $t$, $v_1, \ldots, v_{2n}$). With the greedy approach, the number of path flows is $2n+1$. If we extract the paths like in the second solution described above, we get $n+3$ path flows.

            In \cite{Hartman}, the authors showed that the \textsc{KSplittableFlow} problem is $\NP$-Complete if there are three different flow values on the arcs of the network.

            \vspace{10pt}
            \begin{theorem}[\cite{Hartman}]\label{the:ksplit3v1c}
                Let $x$ be a flow such that on each arc the flow value is either $1$, $2$ or $4$, and let $k$ be an integer. Then it is $\NP$-complete to decide if there exists a decomposition of $x$ into at most $k$ paths.
            \end{theorem}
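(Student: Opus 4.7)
The plan is to establish both (i) membership in $\NP$ and (ii) $\NP$-hardness via a polynomial reduction, following a template analogous to the one used for \cref{the:kcostcfdnpc} but adapted to the severely restricted alphabet of flow values $\{1,2,4\}$.

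For (i), a certificate is a list of at most $k$ $(s,t)$-paths together with the flow value carried along each. One checks in polynomial time that each listed path is indeed an $(s,t)$-path, that the per-arc sums reproduce $x$ exactly, and that the number of paths is at most $k$.

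For (ii), I would reduce from a suitably restricted variant of \textsc{3-Partition} whose item values fit in the binary alphabet $\{1,2,4\}$ (i.e.\ values bounded by $1+2+4 = 7$, which remains $\NP$-hard in the strong sense), or, alternatively, from \textsc{Exact Cover by 3-Sets}. Given such an instance with items $a_1,\dots,a_{3r}$ and bound $T$, I would build a network very close to the one in \cref{fig:red3partkcostcfd}: a source $s$, an item layer, a middle vertex $q$, a bin layer $b_1,\dots,b_r$, and a sink $t$. Each arc $sa_i$ and $a_iq$ of capacity $v(a_i)$ from the earlier construction is replaced by a \emph{bundle} of at most three parallel arcs whose capacities are drawn from $\{1,2,4\}$ and sum to $v(a_i)$; similarly each $qb_j$ and $b_jt$ is replaced by a bundle summing to $T$. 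The flow saturates every arc, so each arc's flow value lies in $\{1,2,4\}$.

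The equivalence argument then mirrors the one in \cref{the:kcostcfdnpc}. In the forward direction, a valid 3-partition is translated into a decomposition in which each item contributes a fixed number of parallel path flows (one per arc in its bundle) routed through the bin containing it; summing over items yields a total number of paths equal to a carefully computed threshold $k$. In the reverse direction, one shows that no decomposition of size at most $k$ can split an item's flow across two bins: any such split would force at least one extra path flow of value $1$ or $2$ on the item side, inflating the count past $k$ by a pigeonhole/amortisation argument using that $T/4 < v(a_i) < T/2$.

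The main obstacle will be the counting bound $k$: with only the values $\{1,2,4\}$, there are many equally plausible local decompositions of a single item bundle, and one must show that a \emph{globally} optimal decomposition is forced to pair each $(s,q)$ half-path with exactly one $(q,t)$ half-path and to respect the $3$-to-$1$ item-to-bin assignment. I expect the proof to require a careful case analysis on how a value-$4$ arc interacts with value-$1$ and value-$2$ arcs in the same bundle, together with a flow-conservation argument at $q$ to rule out "cross-routing" savings that would undermine the correspondence with \textsc{3-Partition}.
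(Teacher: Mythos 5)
This statement is quoted from \cite{Hartman}; the paper does not reprove it, so your argument has to stand on its own, and it has a fatal gap at the very first step of the hardness part. You propose to reduce from a variant of \textsc{3-Partition} whose item values lie in sums of $\{1,2,4\}$, hence are bounded by $7$, and you assert that this variant ``remains $\NP$-hard in the strong sense.'' It does not: the constraint $T/4 < v(a) < T/2$ then forces $T < 28$, so both the item values and the bin capacity are bounded by absolute constants. An instance is determined by the multiplicities of at most seven distinct item values, and feasibility reduces to finding non-negative integer multiplicities for the constantly many value-triples summing to $T$ --- a fixed-dimension integer program, solvable in polynomial time. Strong $\NP$-hardness of \textsc{3-Partition} means hardness when the values are polynomially bounded in the \emph{input length}, not when they are bounded by a constant; conflating the two is what breaks the reduction. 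The alternative source, \textsc{Exact Cover by 3-Sets}, is named but no construction is given for it.

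Even granting a hard source problem, the reverse direction is left open exactly where the difficulty lies. With an item bundle of saturated parallel arcs of values $1$, $2$ and $4$, conservation at $a_i$ does not force the item's flow to reach a single bin: the value-$4$ portion can go to one bin and the $1+2$ portion to another without necessarily increasing the path count, since the bin-side bundles summing to $T$ admit many compatible splits. You flag this ``cross-routing'' as the main obstacle and offer only that a pigeonhole/amortisation argument ``should'' close it; that is precisely the step that needs a concrete gadget, and it is where the restriction to $\{1,2,4\}$ bites. The known proof does not route through a number-partitioning problem at all: it builds choice gadgets that exploit the powers-of-two structure of $\{1,2,4\}$ (a value-$4$ path flow either stays whole or is matched against value-$1$ and value-$2$ arcs, encoding a binary decision). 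As written, your attempt establishes membership in $\NP$ --- that part is fine --- but not $\NP$-hardness.
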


        \subsection{Bichromatic Networks}

            Initially, we treat a simple case of the \textsc{MinCostCFD} for bichromatic networks - when the flow is $\lambda$-uniform.

            \vspace{10pt}
            \begin{theorem}\label{the:mincostcfduni}
                \textsc{MinCostCFD} can be solved in polynomial-time if the network is bichromatic and the $(s,t)$-flow $x$ is $\lambda$-uniforme.
            \end{theorem}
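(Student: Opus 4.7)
The plan is to reduce the problem to two independent maximum-flow computations, one per colour class. Since $x$ is $\lambda$-uniform, \cref{lem:ksplitmult} (applied to $x$ as the maximum flow in the subnetwork obtained by setting every capacity equal to the corresponding flow value) forces every decomposition of $x$ to consist of exactly $N = |x|/\lambda$ path flows of value $\lambda$. Because $D$ has only two colours, each such path is either monochromatic (cost $1$) or bichromatic (cost $2$), so the total cost of a decomposition equals $N + b$, where $b$ is the number of bichromatic paths. Minimising the cost is therefore equivalent to maximising the total number of monochromatic paths.

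For $i \in \{1, 2\}$, let $G_i$ be the subdigraph induced by the colour-$i$ arcs that carry $\lambda$ units of flow, with capacity $\lambda$ on every arc. I would compute a maximum $(s,t)$-flow $y_i$ in $G_i$: by \cref{lem:ksplitmult}, $|y_i| = m_i \lambda$ for some integer $m_i$ and $y_i$ decomposes into $m_i$ monochromatic colour-$i$ path flows of value $\lambda$. Since the two colour classes share no arcs, the two maximum-flow subproblems are completely independent, and $y_i \leq x$ arc-wise, so $x' = x - y_1 - y_2$ is a non-negative $\lambda$-uniform $(s,t)$-flow of value $(N - m_1 - m_2)\lambda$, which by \cref{lem:ksplitmult} again decomposes into $N - m_1 - m_2$ path flows of value $\lambda$.

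The main obstacle is to show that every path in this residual decomposition is bichromatic. Suppose, toward a contradiction, that some residual path $P$ is monochromatic of colour $1$. For each arc $a$ of $P$, $y_2(a) = 0$ because $a$ is colour $1$, while $x'(a) > 0$ forces $(x - y_1)(a) > 0$, hence $x(a) = \lambda$ and $y_1(a) = 0$. Therefore $P$ is an $(s,t)$-path of $G_1$ disjoint from the support of $y_1$, and sending $\lambda$ units of flow along $P$ would augment $y_1$ to a feasible flow of value $(m_1 + 1)\lambda$ in $G_1$, contradicting its maximality; the symmetric argument excludes colour-$2$ residual monochromatic paths. This yields a decomposition of cost $m_1 + m_2 + 2(N - m_1 - m_2) = 2N - m_1 - m_2$, and the matching lower bound is immediate, since in any decomposition the colour-$i$ monochromatic path flows aggregate into a $\lambda$-uniform $(s,t)$-flow supported on $G_i$ whose value cannot exceed $m_i \lambda$. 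The whole procedure is polynomial, comprising two maximum $(s,t)$-flow computations followed by applications of \cref{the:flowdec} to extract the path flows.
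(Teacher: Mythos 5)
Your proof follows essentially the same route as the paper's: compute a maximum $(s,t)$-flow independently in each colour class of the support, extract its monochromatic $\lambda$-valued paths, and charge the remaining $N-m_1-m_2$ of the forced $N=|x|/\lambda$ paths as bichromatic. You actually supply two steps the paper only asserts --- that every residual path is necessarily bichromatic, and the matching lower bound obtained by aggregating the monochromatic paths of an arbitrary decomposition into a flow on $G_i$ --- so your write-up is, if anything, the more complete one; the one caveat (shared with the paper's own proof) is the implicit assumption that $m_1+m_2\le N$, which can fail when the support of $x$ contains a circulation passing through both $s$ and $t$, in which case $x-y_1-y_2$ need not be a non-negative-valued $(s,t)$-flow and the formula $2N-m_1-m_2$ undercounts the optimum.
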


            \begin{shortproof}
                Since $x$ é $\lambda$-uniform, it can be decomposed into $p = |x|/\lambda$ $(s,t)$-path flows of value $\lambda$. These paths are arc-disjoint, and therefore $p$ is minimal. Let us consider a network $\Net^k=(D^k,u^k)$, with $D^k=(V,A^k)$, obtained from $\Net$ and $x$, using only the arcs of colour $k$ and setting $u^k_{a} = x_{a}$, $\forall~a \in A^k$, for $k \in \{1,2\}$; we calculate the maximum flow $x^k$, which is $\lambda$-uniform, and we have the number of path flows of value $\lambda$ on it, given by $p_k = |x^k|/\lambda$. Thus, the number of bichromatic path flows is $p_{12}=p-p_1-p_2$. So, the solution cost is $p_1+p_2+2p_{12}$, which is minimal, since $p_1$ and $p_2$ are maximal.
            \end{shortproof}

            This result can be generalised to bichromatic networks with two distinct flow values on the arcs in which the smaller value divides the greater one, and each flow value is associated to a colour.

            \vspace{10pt}
            \begin{theorem}\label{the:dfcolcstmin2mult}
                \textsc{MinCostCFD} can be solved in polynomial time for bichromatic networks in which every arc of colour $c_i$ has flow value $v_i$, with $i \in \{1,2\}$, $v_1 > v_2$ and $v_2 \mid v_1$.
            \end{theorem}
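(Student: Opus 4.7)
The plan is to reduce to the $\lambda$-uniform bichromatic case handled by \cref{the:mincostcfduni} via a subdivision of the $c_1$-arcs, and then exhibit a three-step algorithm that meets the resulting lower bound. Setting $k = v_1/v_2 \in \mathbbm{Z}_+^*$, I first construct an auxiliary network $\tilde{\mathcal{N}}$ from $\Net$ by replacing every $c_1$-arc $a$ (which carries $x_a = v_1$) by $k$ parallel $c_1$-arcs of capacity $v_2$ each, keeping the $c_2$-arcs untouched. Distributing the flow value $v_1$ evenly among the $k$ copies turns $x$ into a $v_2$-uniform bichromatic $(s,t)$-flow $\tilde{x}$ on $\tilde{\mathcal{N}}$ with $|\tilde{x}| = |x|$. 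Applying \cref{the:mincostcfduni} produces, in polynomial time, the optimum cost $\tilde{C} = p_1 + p_2 + 2\tilde{p}_{12}$, where $p_i$ is the number of paths extracted from a maximum mono-$c_i$ $(s,t)$-flow of $\tilde{x}$ and $\tilde{p}_{12} = |x|/v_2 - p_1 - p_2$.

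The key observation is that every $v_1$-path in $\Net$ must be $c_1$-monochromatic, since each $c_2$-arc carries only $v_2 < v_1$ units. Let $M_1$ denote the maximum number of $v_1$-paths extractable from $x$: by \cref{lem:ksplitmult} applied to the $c_1$-subnetwork of $\Net$ (all capacities equal to $v_1$), $M_1$ is the maximum $(s,t)$-flow value there divided by $v_1$, and the same identity applied in $\tilde{\mathcal{N}}$ gives $p_1 = kM_1$. For any decomposition $\mathcal{D}$ of $x$ using $\ell_1$ $v_1$-paths, $\ell_2$ $v_2$-paths, and $b$ bichromatic paths, splitting each $v_1$-path into $k$ mono-$c_1$ $v_2$-paths routed on disjoint copies yields a decomposition $\tilde{\mathcal{D}}$ of $\tilde{x}$ satisfying
\[
    n_c(\tilde{\mathcal{D}}) \;=\; k\ell_1 + \ell_2 + b \;=\; n_c(\mathcal{D}) + (k-1)\ell_1.
\]
Combining $n_c(\tilde{\mathcal{D}}) \geq \tilde{C}$ with $\ell_1 \leq M_1$ yields the lower bound $n_c(\mathcal{D}) \geq \tilde{C} - (k-1)M_1$ valid for every decomposition of $x$.

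To attain this bound I proceed in three steps: (i) compute a maximum $(s,t)$-flow in the $c_1$-subnetwork of $\Net$ and decompose it, via \cref{lem:ksplitmult}, into $M_1$ monochromatic $v_1$-paths; (ii) compute a maximum $(s,t)$-flow in the $c_2$-subnetwork of the residual and decompose it into $p_2$ monochromatic $v_2$-paths; (iii) apply \cref{the:flowdec} to what remains to obtain a decomposition into $v_2$-path flows (plus possibly circulations, which do not affect the cost). After (i) no $(s,t)$-path survives in the $c_1$-subnetwork -- otherwise $M_1$ would not be maximum -- and the same is true for $c_2$ after (ii); consequently every path produced by step (iii) uses arcs of both colours and is bichromatic. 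Counting the residual mass gives exactly $\tilde{p}_{12}$ such paths, so the total cost is $M_1 + p_2 + 2\tilde{p}_{12} = \tilde{C} - (k-1)M_1$, matching the lower bound.

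The point that deserves the most care is verifying that after steps (i) and (ii) the residual admits no monochromatic $(s,t)$-path, so that step (iii) must deliver bichromatic paths only; this follows from the fact that the maximum mono-coloured $(s,t)$-flow on each subnetwork has been entirely drained, leaving no augmenting forward path. All the subroutines reduce to standard maximum-flow and flow-decomposition computations, so the whole procedure runs in polynomial time.
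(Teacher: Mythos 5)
Your proposal is correct and follows essentially the same route as the paper: replace each $c_1$-arc by $k = v_1/v_2$ parallel $c_1$-arcs of value $v_2$ and reduce to the $\lambda$-uniform bichromatic case of \cref{the:mincostcfduni}. You go further than the paper's short argument by making the cost correspondence explicit (the lower bound $\tilde{C}-(k-1)M_1$ and a matching three-step algorithm on the original network), which is a welcome strengthening rather than a departure.
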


            \begin{proof}
                Let $\Net$ be a network with the characteristics described, and let $x$ be an $(s,t$)-flow in it. Note that $v_1 = k \cdot v_2$ for some $k \in \mathbb{Z}_+^*$. By replacing each arc of colour $c_1$ with $k$ arcs of colour $c_1$, each with flow value $v_2$ and the same endpoints, we obtain an $(s,t)$-flow $x'$ that is $v_2$-uniform, with $|x'| = |x|$. Each $(s,t)$-path flow of colour $c_1$ in $x$ corresponds to $k$ $(s,t)$-path flows of colour $c_1$ and value $v_2$ in $x'$. Applying the decomposition from \cref{the:mincostcfduni}, we obtain the minimum cost $p_1 + p_2 + 2 \cdot p_{12}$, which corresponds to the minimum cost $p_1/k + p_2 + 2 \cdot p_{12}$ of a decomposition of $x$.
            \end{proof}

            With the above result, we solved a very specific case for networks with two colours ($c_1$ and $c_2$) and two distinct flow values ($v_1$ and $v_2$) in the arcs. The general case for such networks remains an open problem. We may divide it into three subcases as follows:

            \begin{enumerate}
                \item $v_2 \mid v_1$ and there is no correspondence between colours and flow values;
                \item $v_2 \nmid v_1$ and every arc with colour $c_i$ has flow value $v_i$, for $i \in \{1,2\}$;
                \item $v_2 \nmid v_1$ and there is no correspondence between colours and flow values.
            \end{enumerate}

            Considering the first case, even in a DAG, the optimal solution is not greedy, i.e. maximize the monochromatic paths first and then the bichromatic paths with the highest flow value, followed by the monochromatic and bichromatic paths with the lowest flow value. This can be observed in the network in \cref{fig:mincostcfd2v2c}. The thin arcs have flow value $1$, the thick arcs have flow value $2$ and the label on each arc indicates its colour. If we start by extracting the path flow through $scdt$ (monochromatic), we will need four bichromatic path flows to decompose the remaining flow and the cost will be $9$. If we start by extracting the path flows through the $sct$ and $sdt$ (bichromatic), we will need two monochromatic path flows to decompose the remaining flow and the cost will be $6$.

                        \begin{figure}[!ht]
                \centering
                \begin{tikzpicture}[>=stealth, auto=left]
                    \tiny
                    \node [nodeb] (s) [label=180:$s$] at (0.0, 2.0) {};
                    \node [nodeb] (a) [label= 90:$a$] at (1.0, 3.7) {};
                    \node [nodeb] (b) [label= 90:$b$] at (1.3, 3.1) {};
                    \node [nodeb] (c) [label= 90:$c$] at (3.0, 3.0) {};
                    \node [nodeb] (d) [label=270:$d$] at (3.0, 1.0) {};
                    \node [nodeb] (e) [label=270:$e$] at (4.9, 0.9) {};
                    \node [nodeb] (f) [label=270:$f$] at (5.2, 0.3) {};
                    \node [nodeb] (t) [label=  0:$t$] at (6.0, 2.0) {};

                    \path [->, very thick]
                        (s) edge [c02, swap] node{\color{black} 2} (c)
                            edge [c09, swap] node{\color{black} 1} (d)
                        (c) edge [c02]       node{\color{black} 2} (d)
                            edge [c09]       node{\color{black} 1} (t)
                        (d) edge [c02]       node{\color{black} 2} (t);

                    \path [->, c02]
                        (s) edge [bend left=30]        node{\color{black} 2} (a)
                            edge [bend left=15]        node{\color{black} 2} (b)
                        (a) edge [bend left=30]        node{\color{black} 2} (c)
                        (b) edge [bend left=15]        node{\color{black} 2} (c)
                        (d) edge [bend right=15, swap] node{\color{black} 2} (e)
                            edge [bend right=30, swap] node{\color{black} 2} (f)
                        (e) edge [bend right=15, swap] node{\color{black} 2} (t)
                        (f) edge [bend right=30, swap] node{\color{black} 2} (t);
                \end{tikzpicture}
                \caption{Bichromatic network with two distinct flow values in the arcs.}
                \label{fig:mincostcfd2v2c}
            \end{figure}
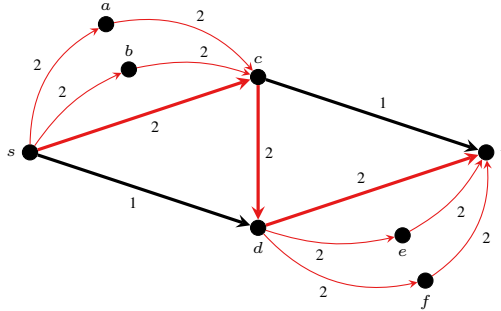

            When we have a network with more than one colour, the solution to the \textsc{MinCostCFD} is not necessarily one that minimizes the number of path flows. We can see this in the \cref{fig:flow3v2c}. Note that its a bichromatic version of the network of \cref{fig:flow3v1c}. Each arc $a$ has a label $(c_a,x_a)$ indicating its colour and flow value. Recall that this flow can be decomposed into $6$ path flows, five of them are bichromatic (three path flows of value $3$ and two path flows of value $1$) and one is monochromatic (with value $2$), making a total cost of $11$. On the other hand, notice that we can extract seven monochromatic path flows, one of value $4$ and six of value $1$, making a total cost of $7$.

                        \begin{figure}[!htb]
                \centering
                \begin{tikzpicture}[>=stealth, auto=left, node distance=1.8cm]
                    \tiny
                    \node [nodeb] (s)  [label=270:$s$]                {};
                    \node [nodeb] (v1) [label=270:$v_1$, right of=s]  {};
                    \node [nodeb] (v2) [label=270:$v_2$, right of=v1] {};
                    \node [nodeb] (v3) [label=270:$v_3$, right of=v2] {};
                    \node [nodeb] (v4) [label=270:$v_4$, right of=v3] {};
                    \node [nodeb] (v5) [label=270:$v_5$, right of=v4] {};
                    \node [nodeb] (v6) [label=270:$v_6$, right of=v5] {};
                    \node [nodeb] (t)  [label=270:$t$,   right of=v6] {};

                    \path [->, very thick]
                        (s)  edge [swap] node{(1,4)} (v1)
                        (v1) edge [swap] node{(1,4)} (v2)
                        (v2) edge        node{(1,4)} (v3)
                        (v3) edge [swap] node{(1,4)} (v4)
                        (v4) edge        node{(1,4)} (v5)
                        (v5) edge [swap] node{(1,4)} (v6)
                        (v6) edge [swap] node{(1,4)} (t);

                    \path [->, thick, c02]
                        (s)  edge [bend left=30]        node{(2,2)} (v1)
                             edge [bend right=40, swap] node{(2,2)} (v3)
                             edge [bend left=50]        node{(2,2)} (v5)
                        (v2) edge [bend left=50]        node{(2,2)} (t)
                        (v4) edge [bend right=40, swap] node{(2,2)} (t)
                        (v6) edge [bend left=30]        node{(2,2)} (t);

                    \path [->, c02]
                        (v1) edge [bend left=25]                node{(2,1)} (v2)
                             edge [bend left=60, looseness=1.3] node{(2,1)} (v2)
                        (v3) edge [bend left=25]                node{(2,1)} (v4)
                             edge [bend left=60, looseness=1.3] node{(2,1)} (v4)
                        (v5) edge [bend left=25]                node{(2,1)} (v6)
                             edge [bend left=60, looseness=1.3] node{(2,1)} (v6);
                \end{tikzpicture}
                \caption{Bichromatic network with 3 distinct flow values on the arcs.}
                \label{fig:flow3v2c}
            \end{figure}
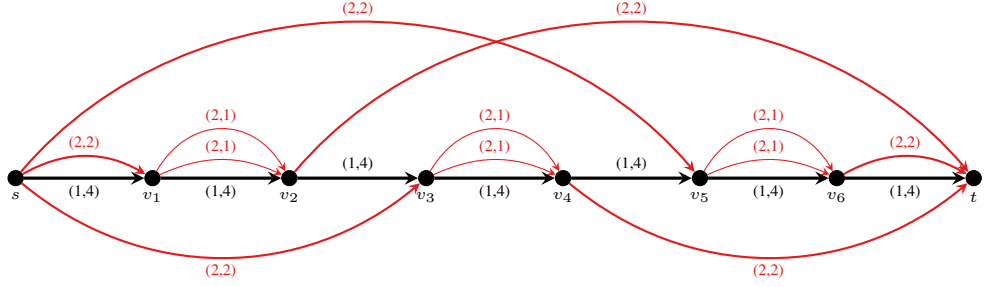

            In networks with two or more colours and three distinct flow values on the arcs, even if they are powers of $2$ (as in the \cref{fig:flow3v1c}), the \textsc{KCostCFD} problem is $\NP$-Complete. We show a reduction to this problem from the \textsc{KSplittableFlow} problem, which is $\NP$-Complete under the same conditions.

            \vspace{10pt}
            \begin{theorem}\label{the:dfcolcstk2c3v}
                \textsc{KCostCFD} is $\NP$-Complete if the network has at least two colours and the flow value on each arc is either $1$, $2$ or $4$.
            \end{theorem}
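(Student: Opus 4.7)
The plan is to establish NP-completeness by reducing from the monochromatic \textsc{KSplittableFlow} problem restricted to arcs of flow value in $\{1,2,4\}$, which is NP-complete by \cref{the:ksplit3v1c}. Membership of \textsc{KCostCFD} in NP is immediate: given a candidate decomposition $(P_1, f_1), \ldots, (P_\ell, f_\ell)$ together with a circulation, one checks in polynomial time that the path flows sum to $x$ on every arc (modulo the circulation) and that $\sum_i n_c(P_i) \leq k$.

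For the reduction, starting from an instance $(\Net = (D,u), x, k)$ of \textsc{KSplittableFlow} with $x$ taking values only in $\{1,2,4\}$, I construct $\Net' = (D', u', c')$ as follows: I keep all vertices and arcs of $D$ and assign colour $1$ to every original arc; then, for each arc $a = (s, v)$ leaving $s$, I insert a fresh vertex $w_a$, delete $a$, and add two new arcs $(s, w_a)$ of colour $2$ and $(w_a, v)$ of colour $1$, both with flow value $x(a) \in \{1,2,4\}$. The flow $x'$ coincides with $x$ on retained arcs and takes value $x(a)$ on each pair of new arcs associated with $a$; clearly $|x'| = |x|$ and conservation is preserved at every vertex (in particular at each $w_a$, which has in- and out-degree $1$). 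Setting $k' = 2k$ yields an instance of \textsc{KCostCFD} with exactly two colours and arc flow values in $\{1,2,4\}$, built in polynomial time.

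The key observation is that every $(s,t)$-path in $D'$ must begin with an arc $(s, w_a)$ of colour $2$ (these are the only arcs leaving $s$) and immediately traverse $(w_a, v)$ of colour $1$, so every $(s,t)$-path flow in any decomposition of $x'$ has cost exactly $2$. Consequently, if $x'$ decomposes into $\ell$ path flows, then the total cost is $2\ell$, and thus $x'$ admits a decomposition of cost at most $2k$ if and only if it admits one with at most $k$ paths. A lift--contract correspondence then closes the argument: each path $P_i$ in a decomposition of $x$ starting with $(s,v)$ lifts to the path $s \to w_{(s,v)} \to v \to \cdots \to t$ of the same flow value in $D'$, and conversely each $(s,t)$-path in $D'$ contracts to an $(s,t)$-path in $D$ by collapsing the initial two arcs into $(s,v)$.

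The main subtlety I expect to address carefully is the possible presence of a circulation in the decomposition: I must check that the lift--contract map extends to cycle flows without inflating or deflating the path count. Since each $w_a$ has in-degree and out-degree $1$, any cycle of $D'$ through $w_a$ must enter via $(s, w_a)$ and exit via $(w_a, v)$, hence contracts to a cycle of $D$ through $s$ using the arc $(s,v)$; cycles of $D'$ avoiding all $w_a$ are identical to cycles of $D$ avoiding the deleted arcs. Therefore the path count is preserved under the transformation, yielding that $x$ is $k$-splittable if and only if $x'$ admits a decomposition of cost at most $2k$, which together with the polynomial-size construction proves the theorem.
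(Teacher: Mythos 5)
Your proof is correct, but your reduction is genuinely different from the paper's. The paper keeps every original arc monochromatic of colour $1$ (so each original path flow costs exactly $1$) and attaches, for each extra colour $i \in \{2,\ldots,q\}$, a disjoint dummy path $sz_it$ of colour $i$ with flow value $1$; the threshold becomes the additive $k' = k + q - 1$. You instead subdivide every arc leaving $s$ and give the first half colour $2$, which forces every $(s,t)$-path to cost exactly $2$, so the total cost is twice the number of paths and the threshold is the multiplicative $k' = 2k$. Both arguments rest on the same principle --- engineering the network so that the cost of any decomposition is an affine function of its number of paths, so that \textsc{KSplittableFlow} with values in $\{1,2,4\}$ transfers directly --- and both keep all flow values in $\{1,2,4\}$. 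Your treatment of the circulation via the in/out-degree-$1$ vertices $w_a$ is sound. The one thing the paper's version buys that yours does not immediately: by varying $q$ it establishes hardness for networks with \emph{exactly} $q$ colours for every $q \geq 2$, whereas your construction yields exactly two colours. That suffices for the statement as written (``at least two colours''), and the gap is closed trivially by additionally grafting the paper's monochromatic dummy paths $sz_it$ onto your network when a prescribed larger number of colours is wanted.
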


            \begin{shortproof}
                We will show a polynomial reduction from the \textsc{KSplittableFlow} to the \textsc{KCostCFD} problem. Given an instance $\langle \Net=(D,u),x,k \rangle$, with $D=(V,A)$ with two special vertices $s,t \in V$, of \textsc{KSplittableFlow}, where the flow values on each arc are in $\{1,2,4\}$, we construct an instance $\langle \Net'=(D',u'),x',k' \rangle$ of \textsc{KCostCFD} with $q \geq 2$ colours as follows:

                \begin{itemize}
                    \item $D' = (V', A', c)$, where:\\
                          $V' = V \cup \{z_i \mid 2 \leq i \leq q\}$ and $A' = A \cup \{sz_i, z_it \mid 1 \leq i \leq q\}$\\
                          and the colouring $c: A' \rightarrow \{1, \ldots, q\}$, where $c(a) = 1$, if $a \in A$; or $c(a) = i$, if $z_i$ is one of the endpoints of $a$;
                    \item we define $x'(a) = x(a)$, if $a \in A$; or $x'(a) = 1$, otherwise;
                    \item we define $u'(a) = x'(a)$, for all $a \in A'$; \item we set $k'=k+q-1$.
                \end{itemize}

                The network $\Net'$ described above has $n+q-1$ vertices and $m+2q-2$ arcs, where $n = |V|$ and $m = |A|$. So, it is polynomial in the input size. This construction is illustrated in \cref{fig:redksplitkcostcfd} (the labels next to the arcs indicate their colours).

                            \begin{figure}[!htb]
                \centering
                \begin{tikzpicture}[>=stealth, auto=left, x=1.2cm, y=1.2cm]
                    \tiny

                    \draw [draw=c08, fill=none,   dashed, rounded corners] (0.3, 0.8) rectangle (6.7, 4.8);
                    \draw [draw=c08, fill=c08!10, dashed, rounded corners] (0.5, 1.0) rectangle (6.5, 3.0);

                    \node[nodeb] (s)  [label=180:$s$]   at (1.0, 2.0) {};
                    \node[nodeb] (t)  [label=  0:$t$]   at (6.0, 2.0) {};
                    \node[nodeb] (z2) [label=300:$z_2$] at (3.5, 3.5) {};
                    \node[nodeb] (zq) [label=300:$z_q$] at (3.5, 4.5) {};

                    \node (s1) at (2.5, 2.5) {};
                    \node (s2) at (2.5, 1.5) {};
                    \node (t1) at (4.5, 2.5) {};
                    \node (t2) at (4.5, 1.5) {};
                    \node (sr) at (2.0, 2.1) {\vdots};
                    \node (tr) at (5.0, 2.1) {\vdots};
                    \node (zr) at (3.5, 4.1) {\vdots};
                    \node (l1) at (6.2, 1.3) {$\Net$};
                    \node (l2) at (6.4, 4.5) {$\Net'$};

                    \path [->]
                        (s)  edge [c09, bend left=20,  pos=0.8]       node{\color{black} 1} (s1)
                             edge [c09, bend right=20, pos=0.8, swap] node{\color{black} 1} (s2)
                             edge [c02, bend left=30]                 node{\color{black} 2} (z2)
                             edge [c06, bend left=38]                 node{\color{black} q} (zq)
                        (z2) edge [c02, bend left=30]                 node{\color{black} 2} (t)
                        (zq) edge [c06, bend left=38]                 node{\color{black} q} (t)
                        (t1) edge [c09, bend left=20,  pos=0.2]       node{\color{black} 1} (t)
                        (t2) edge [c09, bend right=20, pos=0.2, swap] node{\color{black} 1} (t);
                \end{tikzpicture}
                \caption{Reduction from \textsc{KSplittableFlow} to the \textsc{KCostCFD}.}
                \label{fig:redksplitkcostcfd}
            \end{figure}
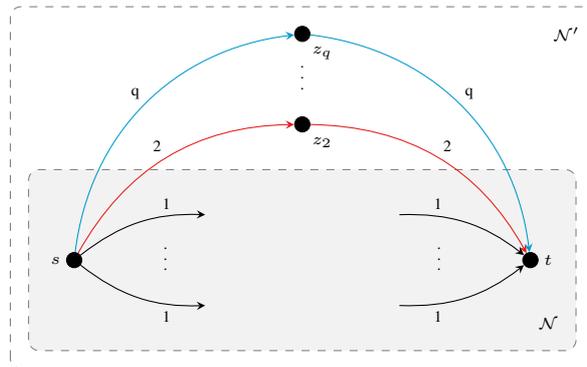

                The $(s,t)$-flow $x$ in $\Net$ can be decomposed into $k$ $(s,t)$-flow paths if and only if $x'$ in $\Net'$ can be decomposed into $(s,t)$-flow paths with a cost $k'=k+q-1$.

                Let $x^1, \ldots, x^k$ be a decomposition of $x$ into $(s,t)$-flow paths. Every flow path in $\Net$ is a flow path of cost $1$ in $\Net'$. The paths $sz_it$, for $2 \leq i \leq q$, are monochromatic and therefore the flow along each of them has cost of $1$. Thus, $x'$ admits a decomposition into $q$ flow paths, $x^1, \ldots, x^k, \ldots, x^{q-1}$, which are monochromatic, and the total cost of this decomposition is $k + q - 1$.

                Now consider a decomposition of $x'$ into $r$ $(s,t)$-path flows with cost $k+q-1$. Observe that the paths $sz_it$ in $\Net'$, for $2 \leq i \leq q$, are arc-disjoint and monochromatic. Thus, they contribute with a cost of $q-1$ to the total cost of the solution. The other flow paths, which are also monochromatic, have a total cost of $k=r-(q-1)$. Since these are also path flows in $\Net$, they correspond to a decomposition of $x$ into $k$ path flows.
            \end{shortproof}

        \subsection{Networks with at least three colours}

            For a network with three colours and a $\lambda$-uniform $(s,t)$-flow, the \textsc{MinCostCFD} problem remains $\NP$-hard. We show a reduction to the \textsc{KCostCFD} problem from the \textsc{Weak-2-Linkage} problem (described below), which, according to \citeonline{FortuneKLinkage}, is $\NP$-complete.

            \problem{Weak 2-Linkage}
                    {A digraph $D=(V,A)$ with $4$ special vertices $u_1$, $u_2$, $v_1$ and $v_2 \in V$.}
                    {Does $D$ contain a pair of arc disjoint paths $P_1$ and $P_2$, such that $P_i$ is a $(u_i,v_i)$-path for $i \in \{1,2\}$?}

            \vspace{-6pt}

            \begin{theorem}\label{the:kcostdfc3c1v}
                The problem \textsc{KCostCFD} is $\NP$ complete even when the flow is uniform and there are only three distinct colors on the edges of the network.
            \end{theorem}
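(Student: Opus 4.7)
The plan is to reduce from \textsc{Weak 2-Linkage} and encode its two sought arc-disjoint linkages as ``straight'' $(s,t)$-paths in a carefully coloured network. Membership in $\NP$ is immediate, as a candidate decomposition can be verified against $x$ and against the colour-count bound $k$ in polynomial time.

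Given an instance $(D, u_1, u_2, v_1, v_2)$ of \textsc{Weak 2-Linkage}, I would build $\Net' = (D', u')$ by adjoining a source $s$ and a sink $t$, adding bridging arcs $su_1, v_1 t$ of colour $2$ and $su_2, v_2 t$ of colour $3$, and colouring every arc of $A(D)$ with colour $1$. To make the candidate $1$-uniform flow $x'(a) \equiv 1$ conservative at each $v \in V(D)$, I would then zero out the imbalance
\[
d^+_D(v) - d^-_D(v) - \mathbbm{1}[v \in \{u_1,u_2\}] + \mathbbm{1}[v \in \{v_1,v_2\}]
\]
by inserting the appropriate number of colour-$1$ \emph{balancing} arcs between $v$ and $\{s, t\}$. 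Capacities are $u'(a) = x'(a) = 1$ everywhere, and the threshold is set to $k' = N + 2$, where $N = |x'|$.

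The core of the correctness argument will be a case analysis on an arbitrary decomposition of $x'$. By \cref{lem:ksplitmult} together with $1$-uniformity, any decomposition consists of exactly $N$ arc-disjoint $(s,t)$-paths, and since interior arcs of such paths cannot touch $s$ or $t$, they all have colour $1$; hence each path's cost is determined solely by the colours of its first and last arcs. The four coloured boundary arcs each carry flow $1$, so they are each used by exactly one path, which forces the pairing of coloured out-arcs with coloured in-arcs to be of one of three types: both ``straight'' (same-colour matching, each path of cost $2$), both ``crossed'' (cross-colour matching, each path of cost $3$), or a mixed configuration in which some coloured starts end on balancing arcs. Adding in the corresponding balancing-to-balancing paths (cost $1$ each) yields total cost $N+2$, $N+3$ or $N+4$ respectively, so the threshold is attainable iff both coloured starts route straight.

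Finally, I would identify straight decompositions in $\Net'$ with $2$-linkages in $D$: the middle of a straight path from $su_i$ to $v_i t$ is a $(u_i, v_i)$-path in $D$, because balancing arcs cannot appear internally, so two arc-disjoint straight paths yield a $2$-linkage; conversely, given a $2$-linkage $P_1, P_2$ in $D$, the straight pair $su_1 P_1 v_1 t$, $su_2 P_2 v_2 t$ is arc-disjoint and removing it leaves a conservative $1$-uniform residual of value $N-2$ which, by \cref{the:flowdec}, decomposes into $N-2$ cost-$1$ paths, realising cost exactly $N+2$. The main obstacle will be the balancing step: the imbalances have to be routed through $s$ or $t$ only, so that no ``shortcut'' through balancing arcs can masquerade as a straight linkage and break the iff.
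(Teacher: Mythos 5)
Your reduction is correct and targets the same source problem as the paper (\textsc{Weak 2-Linkage}), but the gadget is genuinely different and simpler. The paper also colours $A(D)$ with one colour and the four bridging arcs $su_1,v_1t$ and $su_2,v_2t$ with the two remaining colours, but it restores conservation by routing all balancing flow through bottleneck hubs $s\,s_1\,s_2$ and $t_1\,t_2\,t$ whose arcs carry a forced pattern of all three colours; every filler path is therefore trichromatic and the target cost is $3m-2$. You instead attach monochromatic balancing arcs directly at $s$ and $t$, so filler paths cost $1$ and the target is $N+2$; the lower bound then reads cleanly as ``cost $\geq \#\text{paths} + 2 \geq N+2$, with equality only for the straight pairing,'' which is arguably tighter than the paper's accounting. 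Your key structural observations are the right ones: balancing arcs can only be first or last arcs of an $(s,t)$-path (since there are no arcs into $s$ or out of $t$, they also cannot lie on cycles of the circulation), so a cost-$2$ path through $su_i$ and $v_it$ has its interior entirely in $D$. Two small things to tighten: your intermediate totals for the non-straight configurations are slightly off (the fully crossed matching gives $2\cdot 3 + (N-2) = N+4$, the half-mixed one gives $N+3$), though the only conclusion you need --- that every non-straight configuration costs at least $N+3$ --- still holds; and the degenerate identifications $u_i = v_j$ (which make a bridging-to-bridging path cost $1$ or let $su_1$ and $v_2t$ meet without an interior arc) should be excluded or handled separately, an edge case the paper's hub construction sidesteps because its special paths always traverse at least one colour-$3$ arc.
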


            \begin{shortproof}
                We will show a polynomial reduction from the \textsc{Weak 2-Linkage} problem to \textsc{KCostCFD}. Given an instance $\langle D=(V,A), u_1, u_2, v_1, v_2 \rangle$, where $u_1, u_2, v_1, v_2 \in V$, of the \textsc{Weak 2-Linkage} problem, we will construct a network $\Net=(D', u \equiv \lambda)$ with $3$ colours, and define a $\lambda$-uniform $(s,t)$-flow $x$ in it, that is, an instance of \textsc{KCostCFD}. Let $n = |V|$ and $m = |A|$. Note that $n \geq 4$ and $m \geq 2$. Otherwise, we would have a instance ``No'' of the \textsc{Weak 2-Linkage} problem. Initially, we will construct the multidigraph $D'=(V',A',c)$ with a colouring $c: A' \rightarrow \{1,2,3\}$ as follows:

                \begin{itemize}
                    \item $V' = V \cup \{s, t, s_1, s_2, t_1, t_2\}$;
                    \item The arcs in $A'$ will be added in the following steps:
                          \begin{itemize}
                              \item Every arc in $A$ also becomes an arc in $A'$ with colour $3$;
                              \item Add $4$ arcs: from $s$ to $u_1$, from $s$ to $u_2$, from $v_1$ to $t$, and from $v_2$ to $t$. Each of these arcs, with endpoints $u_i$ or $v_i$, has colour $i$, for $i \in \{1,2\}$;
                              \item Add $m-2$ arcs with colour $1$ from $s$ to $s_1$; $m-2$ arcs with colour $3$ from $s_1$ to $s_2$ and from $t_1$ to $t_2$; and $m-2$ arcs with colour $2$ from $t_2$ to $t$;
                              \item Add $d^+_D(u_i)-1$ arcs with colour $2$ from $s_2$ to $u_i$, and $d^-_D(v_i)-1$ arcs with colour $1$ from $v_i$ to $t_1$, for $i \in \{1,2\}$;
                              \item For every $v \in V \setminus \{u_1, u_2\}$, add $d^+_D(v)$ arcs of colour $2$ from $s_2$ to $v$;
                              \item For every $v \in V \setminus \{v_1, v_2\}$, add $d^-_D(v)$ arcs of colour $1$ from $v$ to $t_1$;
                          \end{itemize}
                \end{itemize}

                To complete the definition of $\Net$, we set $u(a) = x(a) = \lambda$ for all $a \in A'$. Note that $b_x(v) = 0$ for all $v \in V' \setminus \{s,t\}$, and that $b_x(s) = -b_x(t) = m \cdot \lambda$, which is the value of the $(s,t)$-flow $x$. This can be decomposed into $m$ path flows, each with value $\lambda$. The resulting multidigraph $D'$ has $n + 6$ vertices and $6(m-2) + m + 4 = 7m - 8$ arcs. The construction, illustrated in \cref{fig:red2linkagekcostcfd}, is therefore polynomial in the size of the input.

                Observe that for each arc $uv$ in $A$, there is a three-colour path from $s$ to $u$ or from $v$ to $t$ in $D'$. Thus, the cost of any decomposition of the $(s,t)$-flow $x$ into $m$ $(s,t)$-path flows is at most $3m$. It will show that $D$ has two arc-disjoint paths, one from $u_1$ to $v_1$ and another from $u_2$ to $v_2$, if and only if the $(s,t)$-flow $x$ admits a decomposition into $m$ $(s,t)$-path flows with cost $3m-2$.

                Assume that $D$ has the two arc-disjoint paths. These are also paths in $D'$. By adding the edges from $s$ to $u_i$, and from $v_i$ to $t$, both with colour $i$ for $i \in \{1,2\}$, we obtain two bichromatic path flows. The other $m-2$ path flows have three colours. Therefore, the cost of this decomposition is $3(m-2) + 2 \cdot 2 = 3m-2$.

                Now, assume that the $(s,t)$-flow $x$ can be decomposed into $m$ path flows, $x^1, \ldots, x^m$, with a cost of $3m-2$. Since $x$ is $\lambda$-uniform, $m$ is minimal and the path flows in the decomposition are pairwise arc-disjoint. Each path flow has between two and three colours. Since there can be at most two path flows with two colours, there are at least $m-2$ path flows with three colours. In fact, there are exactly $m-2$ path flows with $3$ colours. Otherwise, their cost would be $3(m-2+k) + 2(2-k) = 3m - 2 + k > 3m - 2$, for $k \in \{1,2\}$. The arcs of colour $3$ are either in $A$, or from $s_1$ to $s_2$ or from $t_1$ to $t_2$. These last two types must necessarily be in path flows with three colours. The path flows with two colours have colour $3$ and a second colour (either $1$ or $2$). In these, the only arcs with a colour different from $3$ are the first one, from $s$ to $u_i$, and the last one, from $v_i$ to $t$, for $i \in \{1,2\}$. By removing these initial and final arcs, we obtain two paths, from $u_1$ to $v_1$ and from $u_2$ to $v_2$, that are arc-disjoint in $D$.
            \end{shortproof}

                            \begin{figure}[!htb]
                    \centering
                    \begin{subfigure}[b]{0.35\textwidth}
                        \centering
                        \begin{tikzpicture}[>=stealth, auto=left]
                            \tiny
                            \node [nodeb] (u1) [label=120:$u_1$] at (2.0, 4.0) {};
                            \node [nodeb] (u2) [label=250:$u_2$] at (2.0, 2.0) {};
                            \node [nodeb] (z)  [label=270:$z$]   at (4.0, 3.0) {};
                            \node [nodeb] (v1) [label= 70:$v_1$] at (6.0, 4.0) {};
                            \node [nodeb] (v2) [label=290:$v_2$] at (6.0, 2.0) {};

                            \path [->]
                                (u1) edge (z)
                                     edge (v1)
                                (u2) edge (z)
                                     edge (u1)
                                (z)  edge (v1)
                                (v1) edge (v2)
                                (v2) edge (z);
                        \end{tikzpicture}
                        \caption*{$D$}
                    \end{subfigure}
                    \begin{subfigure}[b]{0.64\textwidth}
                        \centering
                        \begin{tikzpicture}[>=stealth, auto=left]
                            \tiny
                            \node [nodeb] (s)  [label=180:$s$]   at (0.0, 3.0) {};
                            \node [nodeb] (s1) [label= 90:$s_1$] at (0.0, 5.0) {};
                            \node [nodeb] (s2) [label= 90:$s_2$] at (2.0, 6.0) {};
                            \node [nodeb] (t)  [label=  0:$t$]   at (8.0, 3.0) {};
                            \node [nodeb] (t1) [label= 90:$t_1$] at (6.0, 6.0) {};
                            \node [nodeb] (t2) [label= 90:$t_2$] at (8.0, 5.0) {};
                            \node [nodeb] (u1) [label=100:$u_1$] at (2.0, 4.0) {};
                            \node [nodeb] (u2) [label=260:$u_2$] at (2.0, 2.0) {};
                            \node [nodeb] (z)  [label=270:$z$]   at (4.0, 3.0) {};
                            \node [nodeb] (v1) [label= 80:$v_1$] at (6.0, 4.0) {};
                            \node [nodeb] (v2) [label=280:$v_2$] at (6.0, 2.0) {};

                            \path[c08, opacity=0.6]
                                (s)  edge [decorate, bend left=10]  (u1)
                                (u1) edge [decorate]                (v1)
                                (v1) edge [decorate, bend left=10]  (t);

                            \path[c08, decoration={zigzag, amplitude=.8mm, segment length=1.5mm}, opacity=0.6]
                                (s)  edge [bend right=10, decorate] (u2)
                                (u2) edge [decorate]                (z)
                                (z)  edge [decorate]                (v1)
                                (v1) edge [decorate]                (v2)
                                (v2) edge [bend right=10, decorate] (t);

                            \path [->]
                                (s1) edge [bend left=20]          node{\color{black} 3} (s2)
                                     edge [bend left=10]                                (s2)
                                     edge                                               (s2)
                                     edge [bend right=10]                               (s2)
                                     edge [bend right=20]                               (s2)
                                (t1) edge [bend left=20]          node{\color{black} 3} (t2)
                                     edge [bend left=10]                                (t2)
                                     edge                                               (t2)
                                     edge [bend right=10]                               (t2)
                                     edge [bend right=20]                               (t2)
                                (u1) edge [pos=0.7, swap]         node{\color{black} 3} (z)
                                     edge [pos=0.2, swap]         node{\color{black} 3} (v1)
                                (u2) edge [swap]                  node{\color{black} 3} (z)
                                     edge [swap]                  node{\color{black} 3} (u1)
                                (z)  edge [pos=0.7, swap]         node{\color{black} 3} (v1)
                                (v1) edge                         node{\color{black} 3} (v2)
                                (v2) edge                         node{\color{black} 3} (z);

                            \path [->, c06]
                                (s)  edge [bend left=10]          node{\color{black} 1} (u1)
                                     edge [bend left=20]          node{\color{black} 1} (s1)
                                     edge [bend left=10]                                (s1)
                                     edge                                               (s1)
                                     edge [bend right=10]                               (s1)
                                     edge [bend right=20]                               (s1)
                                (v1) edge [bend left=10]          node{\color{black} 1} (t)
                                     edge                                               (t1)
                                (u1) edge [bend left=10, pos=0.9] node{\color{black} 1} (t1)
                                (z)  edge [bend left=10]                                (t1)
                                     edge                                               (t1)
                                     edge [bend right=10]                               (t1);

                            \path [->, c02]
                                (s)  edge [bend right=10, swap]   node{\color{black} 2} (u2)
                                (v2) edge [bend right=10, swap]   node{\color{black} 2} (t)
                                (t2) edge [bend left=20]          node{\color{black} 2} (t)
                                     edge [bend left=10]                                (t)
                                     edge                                               (t)
                                     edge [bend right=10]                               (t)
                                     edge [bend right=20]                               (t)
                                (s2) edge                                               (u1)
                                     edge [bend right=30]                               (u2)
                                     edge [bend left=10, pos=0.1] node{\color{black} 2} (v1)
                                     edge [bend left=10]                                (v2)
                                     edge                                               (z);
                        \end{tikzpicture}
                        \caption*{$\Net$}
                    \end{subfigure}
                    \caption{Exemple of reduction from \textsc{Weak 2-Linkage} to \textsc{KCostCFD}.}
                    \label{fig:red2linkagekcostcfd}
                \end{figure}

            \cref{fig:red2linkagekcostcfd} provides an example of the reduction presented in \cref{the:kcostdfc3c1v}. On the left we have an instance of the \textsc{Weak 2-Linkage} problem; and on the right an instance of the \textsc{KCostCFD} problem, where all arcs have capacity and flow equal to $\lambda$. The labels on the arcs indicate their colours. The flow can be decomposed into $m = 7$ arc-disjoint path flows (with $2$ bichromatic and $5$ trichromatic path flows), with cost $19$.

            The result of \cref{the:kcostdfc3c1v} can be generalised to the case where there are multiple flow values in the set of arcs of the network. To do this, it suffices to add an arc with colour $3$ from $s$ to $t$ for each new flow value. Thus, we have:

            \vspace{10pt}
            \begin{corollary}\label{cor:kcostcfd3c}
                \textsc{KCostCFD} is $\NP$-Complete for networks with three colours.
            \end{corollary}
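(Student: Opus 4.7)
The plan is the following. Since \cref{the:kcostdfc3c1v} already establishes NP-completeness of \textsc{KCostCFD} for three-colour networks with a $\lambda$-uniform flow, NP-hardness of the problem restricted only to three-colour networks is immediate: any algorithm solving the three-colour case also solves the uniform three-colour case, and membership in NP is inherited from \cref{the:kcostcfdnpc}. The substantive content of the corollary, and the point of the construction hinted at in the text, is to exhibit NP-hard instances whose flow attains any prescribed number of distinct values on the arcs, ruling out any algorithmic benefit from requiring non-uniform flows.

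To carry out this generalisation, I would start from an arbitrary instance $\langle \Net=(D,u),x,k\rangle$ of the three-colour $\lambda$-uniform case, pick any $r$ desired fresh flow values $f_1,\ldots,f_r$, and form $\Net'=(D',u')$ from $\Net$ by adding $r$ parallel arcs $a_1,\ldots,a_r$ from $s$ directly to $t$, each of colour~$3$, with $u'(a_i)=x'(a_i)=f_i$; the original flow $x$ is preserved on every arc of $D$. Setting $k'=k+r$ completes the construction. The resulting network still uses exactly three colours, carries flow of up to $r+1$ distinct values, and is produced in polynomial time in the size of the original instance.

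For the equivalence, the key observation is that every $(s,t)$-path using any arc $a_i$ consists of that arc alone, since $t$ terminates the path. Consequently any decomposition of $x'$ routes at least one path flow along each $a_i$, and each such path flow is monochromatic of cost~$1$; the remaining path flows form a decomposition of the original flow $x$ in $\Net$. An optimal decomposition of $x'$ therefore consists of exactly one path flow per $a_i$ combined with an optimal decomposition of $x$, with total cost equal to $r$ plus the optimum for $\Net$. Hence $x'$ admits a decomposition of cost at most $k'$ in $\Net'$ if and only if $x$ admits one of cost at most $k$ in $\Net$. The only subtle point---and the main obstacle of the argument---is verifying that no decomposition can combine an $a_i$-flow with other flow to lower the cost; this follows directly from the single-arc structure of each $a_i$, so the reduction goes through and the corollary is established.
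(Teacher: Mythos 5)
Your proposal is correct and follows exactly the paper's intended argument: the corollary rests on \cref{the:kcostdfc3c1v}, and the generalisation to multiple flow values is obtained by adding, for each new value, a colour-$3$ arc from $s$ to $t$, with the cost bookkeeping $k'=k+r$ that you spell out. You in fact supply more detail than the paper, which only states the construction in one sentence before the corollary.
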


            According to \cite{BangJensenSubdivision}, the \textsc{Weak 2-Linkage} problem is $\NP$-Complete even in digraphs with maximum degree $3$. The network from the reduction proposed in \cref{the:kcostdfc3c1v} can be modified such that, except for $s$ and $t$, every vertex has maximum degree $6$. This can be achieved by taking an instance of the \textsc{Weak 2-Linkage} problem where every vertex has a degree of at most $3$, and transforming each multipath $ss_1s_2$ and $t_1t_2t$ into $m-2$ multipaths of the form $ss_1^is_2^i$ and $m-2$ multipaths of the form $t_1^it_2^it$, for $1 \leq i \leq m-2$. Except for $s$ and $t$, each vertex in these paths has degree $2$ and each vertex from the original digraph will have its degree doubled. Therefore, we have the following result:

            \vspace{10pt}
            \begin{corollary}\label{cor:kcostcfdg6}
                The \textsc{KCostCFD} problem is $\NP$-Complete even when restricted to networks with $3$ colours where, except for $s$ and $t$, every vertex has degree at most $6$.
            \end{corollary}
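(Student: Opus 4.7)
The plan is to reduce from \textsc{Weak 2-Linkage} restricted to digraphs of maximum degree $3$, which remains \NP-complete by \cite{BangJensenSubdivision}, and to refine the construction of \cref{the:kcostdfc3c1v} so that the auxiliary "pipe" vertices $s_1, s_2, t_1, t_2$ no longer accumulate high degree, along the lines already sketched in the paragraph preceding the statement.

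Concretely, I would replace the single pair $(s_1,s_2)$ by $m-2$ parallel chains $s \to s_1^i \to s_2^i$ for $1 \leq i \leq m-2$, each consisting of one colour-$1$ arc followed by one colour-$3$ arc; symmetrically, $(t_1,t_2)$ would be replaced by $m-2$ chains $t_1^i \to t_2^i \to t$ with one colour-$3$ arc followed by one colour-$2$ arc. The $m-2$ colour-$2$ arcs that originally left $s_2$ toward the vertices of $V$ are partitioned one per $s_2^i$, preserving the multiplicities used in \cref{the:kcostdfc3c1v}; the incoming colour-$1$ arcs to $t_1$ are partitioned analogously among the $t_1^i$. Capacities and flow values remain equal to $\lambda$ on every arc, so the resulting $(s,t)$-flow is still $\lambda$-uniform of value $m\lambda$. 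Correctness then carries over essentially verbatim from the proof of \cref{the:kcostdfc3c1v}: two arc-disjoint $(u_i,v_i)$-paths in $D$ give rise to two bichromatic path flows plus $m-2$ trichromatic path flows, each using a distinct pipe, for total cost $3m-2$; conversely, any decomposition of cost $3m-2$ must contain exactly two bichromatic path flows whose middle portions give the two desired arc-disjoint paths in $D$.

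It then remains to verify the degree bound. The new vertices $s_1^i, s_2^i, t_1^i, t_2^i$ all have degree $2$ by construction. Each $v \in V$ contributes $d_D(v) \leq 3$ incidences coming from the colour-$3$ copies of the arcs of $A$, together with at most $d_D(v) \leq 3$ further incidences from the pipe endpoints (or, for $v \in \{u_1,u_2,v_1,v_2\}$, one endpoint from $s$ or $t$ that replaces a pipe-end incidence), and hence has total degree at most $6$ in the constructed network. The main obstacle I foresee is purely combinatorial bookkeeping: checking that the "one pipe per colour-$1$/colour-$2$ arc" partition faithfully reproduces the incidence multiplicities used in \cref{the:kcostdfc3c1v}, so that its correctness argument transfers without modification.
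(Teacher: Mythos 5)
Your proposal matches the paper's argument: the paper likewise starts from \textsc{Weak 2-Linkage} on digraphs of maximum degree $3$ and splits the pipes $ss_1s_2$ and $t_1t_2t$ into $m-2$ parallel chains $ss_1^is_2^i$ and $t_1^it_2^it$, so that the new pipe vertices have degree $2$ and every original vertex has its degree doubled to at most $6$. Your extra bookkeeping (one colour-$2$ arc out of each $s_2^i$, one colour-$1$ arc into each $t_1^i$, matching the $m-2$ count) is exactly how the $m-2$ pipe exits must be distributed, so the correctness of \cref{the:kcostdfc3c1v} transfers as you claim.
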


            In \citeonline{FortuneKLinkage}, the authors showed that \textsc{Weak 2-Linkage} problem can be solved in polynomial time in $DAGs$. We also decided to analyse the \textsc{KCostCFD} problem for $DAGs$ with $3$ colours. Let us consider the network illustrated in \cref{fig:acyclicnetwork3c1v}. In this network, the label on each arc represents its colour. Note that each $(s,t)$-path flow contains at least two colours. Depending on how the first path is extracted, the remaining path may have $2$ or $3$ colours. To the optimal solution, we must extract the path flows through the paths $sacdfgt$ and $sbcefht$, both bichromatic. By selecting part of this solution, and considering the segments of the path flows starting from vertex $c$, notice that an optimal decomposition for the highlighted area is not achieved, as there are two bichromatic paths, while the optimal solution for this segment consists of one monochromatic path and one bichromatic path.

                        \begin{figure}[!htb]
                \centering
                \begin{tikzpicture}[>=stealth, auto=left, x=1.5cm, y=1.0cm]
                    \tiny
                    \draw [draw=c08, fill=c08!10, rounded corners, dashed] (1.8, 2.7) rectangle (6.2, -0.7);
                    \node [nodeb] (s) [label=180:$s$] at (0.0, 1.0) {};
                    \node [nodeb] (a) [label= 90:$a$] at (1.0, 2.0) {};
                    \node [nodeb] (b) [label=270:$b$] at (1.0, 0.0) {};
                    \node [nodeb] (c) [label= 90:$c$] at (2.0, 1.0) {};
                    \node [nodeb] (d) [label= 90:$d$] at (3.0, 2.0) {};
                    \node [nodeb] (e) [label=270:$e$] at (3.0, 0.0) {};
                    \node [nodeb] (f) [label= 90:$f$] at (4.0, 1.0) {};
                    \node [nodeb] (g) [label= 90:$g$] at (5.0, 2.0) {};
                    \node [nodeb] (h) [label=270:$h$] at (5.0, 0.0) {};
                    \node [nodeb] (t) [label= 90:$t$] at (6.0, 1.0) {};

                    \path [c08, decoration={snake, amplitude=.8mm, segment length=2.0mm}, opacity=0.6]
                        (s) edge [decorate, bend left=15]  (a)
                        (a) edge [decorate, bend left=15]  (c)
                        (c) edge [decorate, bend left=15]  (d)
                        (d) edge [decorate, bend left=15]  (f)
                        (f) edge [decorate, bend left=15]  (g)
                        (g) edge [decorate, bend left=15]  (t);

                    \path [c08, decoration={zigzag, amplitude=.8mm, segment length=1.5mm}, opacity=0.6]
                        (s) edge [decorate, bend right=15] (b)
                        (b) edge [decorate, bend right=15] (c)
                        (c) edge [decorate, bend right=15] (e)
                        (e) edge [decorate, bend right=15] (f)
                        (f) edge [decorate, bend right=15] (h)
                        (h) edge [decorate, bend right=15] (t);

                    \path [->]
                        (s) edge [bend left=15]             node{\color{black} 3} (a)
                            edge [bend right=15, c06, swap] node{\color{black} 1} (b)
                        (a) edge [bend left=15,  c02]       node{\color{black} 2} (c)
                        (b) edge [bend right=15, c02, swap] node{\color{black} 2} (c)
                        (c) edge [bend left=15]             node{\color{black} 3} (d)
                            edge [bend right=15, c02, swap] node{\color{black} 2} (e)
                        (d) edge [bend left=15]             node{\color{black} 3} (f)
                        (e) edge [bend right=15, c02, swap] node{\color{black} 2} (f)
                        (f) edge [bend left=15,  c02]       node{\color{black} 2} (g)
                            edge [bend right=15, c06, swap] node{\color{black} 1} (h)
                        (g) edge [bend left=15,  c02]       node{\color{black} 2} (t)
                        (h) edge [bend right=15, c06, swap] node{\color{black} 1} (t);
                \end{tikzpicture}
                \caption{Example of network over a \emph{DAG} with 3 colours and a uniform flow.}
                \label{fig:acyclicnetwork3c1v}
            \end{figure}

            As observed in the previous example, a $\lambda$-uniform flow in an acyclic network with $3$ distinct colours on its arcs, the problem does not exhibit the \emph{optimal substructure}, in which part of the optimal solution is itself an optimal solution for the respective subproblem, that is, an optimal solution to the problem contains within it optimal solutions to subproblems. This property is characteristic of problems that can be efficiently solved by greedy algorithms or dynamic programming technics (see \cite{Cormen}).

            We show that \emph{KCostCFD} é $\NP$-Complete, even for $\lambda$-uniform flows, in acyclic networks with at least $5$ colours. That is done with a reduction from the \textsc{1-in-3SAT} problem (described bellow). According to \cite{Schaefer}, this problem is $\NP$-Complete.

            \problem{1-in-3SAT}
                    {A 3CNF formula $\varphi$ with $m$ clauses and $n$ variables.}
                    {Does there exist a truth assignment to the variables of $\varphi$ such that just one literal per clause is true?}

            \vspace{-12pt}

            \begin{theorem}\label{the:kcostcfddaguni}
                \textsc{KCostCFD} problem is $\NP$-Complete, even for acyclic networks with $5$ or more colours and $\lambda$-uniform $(s,t)$-flows.
            \end{theorem}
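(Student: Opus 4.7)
The plan is first to observe that \textsc{KCostCFD} lies in $\NP$: a purported decomposition into $(s,t)$-path flows is a polynomial-size certificate, and one verifies in polynomial time both the arc-by-arc conservation $\sum_i x^i_a = x_a$ and the cost bound $\sum_i n_c(P_i) \leq k$. For hardness I would reduce from \textsc{1-in-3SAT}. Given a 3CNF formula $\varphi$ with variables $X = \{x_1,\ldots,x_n\}$ and clauses $C = \{C_1,\ldots,C_m\}$, the goal is to build in polynomial time an acyclic network $\Net = (D, u \equiv \lambda)$ with colouring $c\colon A(D) \to \{1,\ldots,5\}$ and a $\lambda$-uniform $(s,t)$-flow $x$, together with a threshold $k$ depending on $n$ and $m$, such that $\varphi$ admits a 1-in-3 satisfying assignment iff $x$ decomposes into $(s,t)$-path flows of cost at most $k$. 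The main leverage point is \cref{lem:ksplitmult}: since $x$ is $\lambda$-uniform, every decomposition consists of exactly $|x|/\lambda$ arc-disjoint $(s,t)$-path flows, so minimising cost amounts to minimising the total number of colour occurrences across a \emph{fixed} number of paths.

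The intended construction places three kinds of gadgets in series between $s$ and $t$. Each variable $x_i$ gets a \emph{choice gadget} with two parallel branches coloured with two of the five colours, representing the ``true'' and ``false'' assignments; the $\lambda$-uniform structure forces every path flow through the gadget to commit to exactly one branch, which will be read as the value of $x_i$. Each clause $C_j$ gets a \emph{clause gadget} with three parallel literal-branches whose colours are aligned with the corresponding variable choices so that a path coming from a ``true'' branch of a variable traverses its literal without introducing a new colour, whereas a path coming from a ``false'' branch necessarily introduces an additional colour on crossing that literal. A fifth ``spine'' colour is used on the connectors gluing the variable block to the clause block in a fixed linear order, ensuring acyclicity and keeping the total number of colours at $5$. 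Since each gadget has size linear in $n+m$, the construction is polynomial.

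The cost threshold $k$ is then calibrated so that it is exactly met iff, in each clause gadget, precisely one of the three literal-branches is ``cheap'' (matched by a true-branch path) and the other two are ``expensive'' (matched by false-branch paths). The forward direction is a direct construction: given a 1-in-3 satisfying assignment, route the appropriate paths through the matching branches. The converse reads the assignment from the choice gadgets and uses the cost budget to argue that in every clause at least one literal must be true (otherwise one extra colour is incurred) and at most one literal may be true (otherwise the extra colour savings clash with the spine-colour bookkeeping). The main obstacle will be the fine-grained calibration of the three literal-branch colours and the bound $k$ so that the two failure modes of 1-in-3, namely ``no true literal'' and ``two or three true literals'', are \emph{both} priced above the threshold while a 1-in-3 configuration is priced exactly at it, all without breaking acyclicity or exceeding five colours; once this gadget tuning is done, the equivalence between 1-in-3 satisfying assignments of $\varphi$ and cost-$k$ decompositions of $x$ follows routinely.
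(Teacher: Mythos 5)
Your high-level framing is right: \textsc{KCostCFD} is in $\NP$ via the obvious certificate, the source problem \textsc{1-in-3SAT} is the one the paper uses, and the key leverage point is indeed that a $\lambda$-uniform flow decomposes into exactly $|x|/\lambda$ arc-disjoint path flows, so the cost budget controls colour occurrences over a fixed set of paths. But the proof has a genuine gap where it matters most: the gadgets are never actually constructed. Everything that makes the reduction work is deferred to an unspecified ``fine-grained calibration of the three literal-branch colours and the bound $k$,'' and you concede yourself that this is ``the main obstacle.'' In particular, your plan to use \emph{exactly} five colours, with variable gadgets sharing colours and a single ``spine'' colour, creates a problem you do not address: once distinct variables share colours, nothing prevents a path flow from entering one variable's branch and exiting through another variable's literal occurrences without paying any extra colour, so the decomposition need not encode a consistent assignment at all. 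Likewise, your mechanism for enforcing ``at most one true literal per clause'' (``the extra colour savings clash with the spine-colour bookkeeping'') is not an argument.

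The paper sidesteps both difficulties by \emph{not} fixing the number of colours at five: it uses $n+2$ colours, giving each variable $x_i$ its own private colour $i+2$ on all arcs of its two $(s,t)$-paths, plus colours $1$ and $2$ on the two arcs out of $s$ into the variable vertex. The cost target is exactly $4n$, i.e.\ every one of the $2n$ arc-disjoint paths must be bichromatic, which pins each path to a single variable (its private colour) and to a single ``polarity'' (colour $1$ or colour $2$). The ``exactly one true literal'' condition is then enforced purely structurally: each clause gadget is three parallel arcs, one of colour $1$ and two of colour $2$, traversed by exactly three arc-disjoint paths, so exactly one of them is a colour-$1$ path. The statement ``$5$ or more colours'' is satisfied only because $n\geq 3$ forces $n+2\geq 5$; the paper does not prove hardness for a constant number of colours, which is what your sketch implicitly aims at and does not deliver. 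To repair your proof you would either have to adopt the paper's one-colour-per-variable device (abandoning the five-colour cap) or supply a concrete five-colour gadget with a proof that cross-variable routing and non-1-in-3 assignments are both priced above the threshold; as written, neither is done.
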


            \begin{shortproof}
                We will show a polynomial reduction from the \textsc{1-in-3SAT} to \textsc{KCostCFD}. Given an instance $\varphi$ of the first, a $3$CNF formula with $m$ clauses and $n$ variables, we will construct a network $\Net=(D', u \equiv \lambda)$, with $n + 2$ colours and a $\lambda$-uniform $(s,t)$-flow $x$.

                Initially, we create two special vertices $s$ and $t$, and a vertex $v_i$ for each variable $v_i$, for $1 \leq i \leq n$, with two arcs from $s$ to it, one with colour $1$ and another with colour $2$. For each clause $C_j$ in $\varphi$, for $1 \leq j \leq m$, we create a gadget with vertices $c_j^s$, $c_j^t$, and three arcs from the first to the second, one with colour $1$ and two with colour $2$.

                Now we need to construct the paths from $s$ to $t$. For each variable $v_i$, we will construct $2$ such paths. If the literal $v_i$ appears in any clause, let $C_j, \ldots, C_k$ be the sequence of clauses that contain the literal $v_i$, for $1 \leq j \leq k \leq m$. Add an arc from vertex $v_i$ to $c_j^s$. If $j < k$, then add an arc from $c_x^t$ to $c_{x+1}^s$ for $j \leq x < k$. Finally, add an arc from $c_k^t$ to $t$. If there are no clauses with the literal $v_i$, create an arc from $v_i$ to $t$. Proceed similarly for the literal $\nao{v_i}$. All arcs mentioned here have colour $i+2$. Since the instance of \textsc{1-in-3SAT} has at least $3$ variables, the network here defined has at least $5$ colours.

                By the construction here described, illustrated in \cref{fig:red13satkcostcfd}, in addition to $s$ and $t$, $n$ vertices are created, one for each variable, and $2m$ vertices, two for each clause. For each variable, $4$ arcs are created, and for each clause, $6$ arcs are created. Thus, the resulting network has exactly $n + 2m + 2$ vertices and $4n + 6m$ arcs. Therefore, the construction is polynomial in the input size.

                Finally, we define $u(a) = x(a) = \lambda$ for every arc $a$ in the network. Thus, the $(s,t)$-flow $x$ is $\lambda$-uniform and has value $2n\lambda$. It can be decomposed into $2n$ $(s,t)$-path flows with value $\lambda$. Since the path flows are arc-disjoint, the number $2n$ is minimal. Each path has at least two colours (either $1$ or $2$ from $s$ to a vertex $v_i$, and $i+2$ leaving $v_i$). Thus, the cost of any decomposition of $x$ into path flows is at least $4n$. It will be shown that the formula $\varphi$ in the \textsc{1-in-3SAT} problem is satisfiable if and only if $x$ can be decomposed into flow paths with a cost $4n$.

                Observe that the number of path flows with value $\lambda$ through the gadget corresponding to a clause in $\varphi$ is exactly $3$. One of these paths goes through the arc with colour $1$, and the other two paths go through an arc with colour $2$.

                Let us consider an assignment of values to the variables of $\varphi$ such that only one literal per clause is true. For each variable $v_i$ of $\varphi$, two $(s,t)$-path flows must be taken, both with colours $i+2$ and a second colour ($1$ for one path and $2$ for the other). For each variable $v_i$, at least one of its literals ($v_i$ or $\nao{v_i}$) must be present in some clause of $\varphi$. Consequently, at least one of the $(s,t)$-path flows through $v_i$ must go through at least one of the clause gadgets. Now, will describe how each one of these path flows must be taken.

                If $v_i$ is \emph{false} (resp. \emph{true}), the first $(s,t)$-flow path, with colours $1$ and $i+2$, must go through the gadgets corresponding to the clauses that contain the literal $\nao{v_i}$ (resp. $v_i$), if there is some. Otherwise, it should go through the arc from $v_i$ to $t$. The second $(s,t)$-flow path, with colours $2$ and $i+2$, must go through the gadgets corresponding to the clauses that contain the literal $v_i$ (resp. $\nao{v_i}$), if there is some. Otherwise, it should go through the arc from $v_i$ to $t$.

                Now assume the $(s,t)$-flow $x$ admits a decomposition into $(s,t)$-path flows with cost equal to $4n$. Since the minimum number of flow paths in a decomposition of $x$ is $2n$ and the cost of each one of them is at least $2$, there are exactly bichromatic $2n$ $(s,t)$-path flows. Notice that each $(s,t)$-path flow starts with two colours (either $1$ or $2$) from $s$ to a vertex $v_i$, and colour $i+2$ from there, with $1 \leq i \leq n$. Thus, if an $(s,t)$-path flow has colours $1$ and $2$, then it has at least three colours. For each vertex $v_i$, there is at least one and at most two arcs from it to a gadget of clause. Select (arbitrarily, if there is more than one option) an $(s,t)$-path flow that has an arc from $v_i$ to $c_j^s$, with $1 \leq j \leq m$. If clause $C_j$ contains the literal $v_i$ and the selected $(s,t)$-flow path has colour $1$ (resp. $2$), assign \emph{true} (resp. \emph{false}) to the variable $v_i$ in $\varphi$. Similarly, if clause $C_j$ contains the literal $\nao{v_i}$ and the selected $(s,t)$-path flow has colour $1$ (resp. $2$), assign \emph{false} (resp. \emph{true}) to the variable $v_i$ in $\varphi$. Since each gadget of clause has only one edge of colour $1$, only one literal from the corresponding clause in $\varphi$ will have the value \emph{true}, which is a condition for the formula $\varphi$ to be satisfied in the \textsc{1-in-3SAT} problem.
            \end{shortproof}

            Figure \cref{fig:red13satkcostcfd} brings an example of the reduction from \textsc{1-in-3SAT} to \textsc{KCostCFD}, with the network obtained from a formula $\varphi$. In this figure, the label on each arc indicates its colour. The highlighted $(s,t)$-path flows are those that pass through each vertex $v_i$, with $1 \leq i \leq n$, and then through a clause vertex (in the example, the vertex $c_1^s$). From these paths, based on the reduction from \cref{the:kcostcfddaguni}, we get a truth assignment that makes $\varphi$ satisfiable in \textsc{1-in-3SAT} problem. Observe that the paths through vertices $v_1$ and $v_2$ have colour $2$, and the literals $v_1$ and $v_2$ are in $C_1$. So, the value \emph{false} should be assigned to these two variables. The path that passes through $v_3$ has colour $1$, and the literal $v_3$ is in $C_1$. Thus, the value \emph{true} should be assigned to the variable $v_3$.

                        \begin{figure}[!htb]
                \centering
                \begin{tikzpicture}[>=stealth, auto=left, x=1.3cm, y=1.3cm]
                    \tiny
                    \def\clausula#1#2#3{
                        \node[nodeb] (c#1s) [label=90:$c_#1^s$] at (#2, #3)     { };
                        \node[nodeb] (c#1t) [label=90:$c_#1^t$] at (#2 + 2, #3) { };

                        \path[->]
                            (c#1s) edge [c06, bend left=40]  node{\color{black} 1} (c#1t)
                                   edge [c02]                node{\color{black} 2} (c#1t)
                                   edge [c02, bend right=40] node{\color{black} 2} (c#1t);
                    }

                    \node        (f) at (4.5, 6.5) {$\varphi = \underbrace{(v_1 \vee v_2 \vee v_3)}_{C_1} \wedge
                                                               \underbrace{(v_1 \vee \nao{v_2} \vee \nao{v_3})}_{C_2}$};

                    \node[nodeb] (s)  [label=180:$s$]  at (0.0, 3.0) { };
                    \node[nodeb] (t)  [label=  0:$t$]  at (9.0, 3.0) { };
                    \node[nodeb] (v1) [label=90:$v_1$] at (2.2, 4.5) { };
                    \node[nodeb] (v2) [label=90:$v_2$] at (2.5, 3.0) { };
                    \node[nodeb] (v3) [label=90:$v_3$] at (2.2, 1.5) { };

                    \clausula{1}{4.5}{4.0};
                    \clausula{2}{4.5}{2.0};

                    \draw [dashed, c08] (1.0, 6.0) -- (1.0, 1.0);
                    \draw [dashed, c08] (3.5, 6.0) -- (3.5, 1.0);
                    \draw [dashed, c08] (7.5, 6.0) -- (7.5, 1.0);

                    \node  [text=c08] (lbv) at (2.2, 5.8) {Variables};
                    \node  [text=c08] (lbc) at (5.5, 5.8) {Clauses};

                    \path[c08, decoration={coil, amplitude=.8mm, segment length=1mm}, opacity=0.5]
                        (s)   edge [decorate, bend right=15]   (v1)
                        (v1)  edge [decorate, out=0, in=140]   (c1s)
                        (c1s) edge [decorate, bend right=40]   (c1t)
                        (c1t) edge [decorate, out=270, in=140] (c2s)
                        (c2s) edge [decorate]                  (c2t)
                        (c2t) edge [decorate]                  (t);

                    \path[c08, decoration={zigzag, amplitude=.8mm, segment length=1mm}, opacity=0.5]
                        (s)   edge [decorate, bend right=15] (v2)
                        (v2)  edge [decorate, out=0, in=180] (c1s)
                        (c1s) edge [decorate]                (c1t)
                        (c1t) edge [decorate, bend left=20]  (t);

                    \path[c08, decoration={snake, amplitude=.8mm, segment length=1.5mm}, opacity=0.5]
                        (s)   edge [decorate, bend left=15]   (v3)
                        (v3)  edge [decorate, out=20, in=220] (c1s)
                        (c1s) edge [decorate, bend left=40]   (c1t)
                        (c1t) edge [decorate, bend right=10]  (t);

                    \path[->, c06, thick]
                        (s) edge [bend left=15, pos=0.7]          node{\color{black} 1} (v1)
                            edge [bend left=15, pos=0.7]          node{\color{black} 1} (v2)
                            edge [bend left=15, pos=0.7]          node{\color{black} 1} (v3);

                    \path[->, c02]
                        (s) edge [bend right=15, pos=0.7, swap]   node{\color{black} 2} (v1)
                            edge [bend right=15, pos=0.7, swap]   node{\color{black} 2} (v2)
                            edge [bend right=15, pos=0.7, swap]   node{\color{black} 2} (v3);

                    \path [->, c03]
                        (v1)  edge [out=0, in=140, pos=0.3, swap] node{\color{black} 3} (c1s)
                              edge [bend left=50, pos=0.1]        node{\color{black} 3} (t)
                        (c1t) edge [out=270, in=140, swap]        node{\color{black} 3} (c2s)
                        (c2t) edge                                node{\color{black} 3} (t);

                    \path [->, c10]
                        (v2)  edge [out=0, in=180, pos=0.3]       node{\color{black} 4} (c1s)
                              edge [out=0, in=180, pos=0.3, swap] node{\color{black} 4} (c2s)
                        (c1t) edge [bend left=20, pos=0.2]        node{\color{black} 4} (t)
                        (c2t) edge [bend left=25, pos=0.2]        node{\color{black} 4} (t);

                    \path [->, c11]
                        (v3)  edge [out=20, in=220, pos=0.2]      node{\color{black} 5} (c1s)
                              edge [out=0, in=220, pos=0.3, swap] node{\color{black} 5} (c2s)
                        (c1t) edge [bend right=10, pos=0.3, swap] node{\color{black} 5} (t)
                        (c2t) edge [bend right=25, pos=0.2, swap] node{\color{black} 5} (t);
                \end{tikzpicture}
                \caption{Reduction from \textsc{1-em-3SAT} to \textsc{KCostCFD}.}
                \label{fig:red13satkcostcfd}
            \end{figure}

            From the reduction proposed in \cref{the:kcostcfddaguni}, notice that every vertex other than $s$ and $t$ have degree $4$ or $6$. Thus, we have the following:

            \vspace{10pt}
            \begin{corollary}\label{cor:kcostcfd5cg46}
                \textsc{KCostCFD} is $\NP$-Complete, even when restricted to acyclic networks with $5$ or more colours in which, except for $s$ and $t$, every vertex has degree $4$ or $6$.
            \end{corollary}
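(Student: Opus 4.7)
The plan is essentially a verification argument: I would revisit the acyclic network $\mathcal{N}$ built in the proof of \cref{the:kcostcfddaguni} and check that its internal vertices already have degree $4$ or $6$, so no modification of the reduction is needed. Since that reduction produces an acyclic network with $n+2$ colours (hence at least $5$ whenever $n \geq 3$, which may be assumed for \textsc{1-in-3SAT}), uses a $\lambda$-uniform $(s,t)$-flow, and is $\NP$-complete, it suffices to analyse the degrees of the vertices other than $s$ and $t$.

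First, I would inspect each variable vertex $v_i$. By construction, $v_i$ receives exactly two arcs from $s$ (one of colour $1$ and one of colour $2$) and emits exactly two arcs, one for the literal $v_i$ and one for $\nao{v_i}$; each outgoing arc either leads to the head $c_j^s$ of the first clause gadget in the corresponding chain or, if the literal is absent from $\varphi$, directly to $t$. In every case this gives $d(v_i) = 4$.

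Second, I would inspect each clause gadget $(c_j^s, c_j^t)$. The vertex $c_j^s$ receives exactly one arc from the predecessor of each of the three literal chains passing through $C_j$ (a variable vertex $v_i$ or the $c_{j'}^t$ of the previous clause containing the same literal) and emits the three internal arcs to $c_j^t$ (one of colour $1$ and two of colour $2$). Symmetrically, $c_j^t$ receives the three internal arcs from $c_j^s$ and emits one arc per literal to its successor in the chain (either the $c_{j''}^s$ of the next clause containing the literal, or $t$). Therefore $d(c_j^s) = d(c_j^t) = 6$.

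Since $s$ and $t$ are the only remaining vertices, and acyclicity, $\lambda$-uniformity of the flow, and the colour count are all inherited directly from \cref{the:kcostcfddaguni}, the $\NP$-completeness transfers verbatim. The only (very mild) obstacle is bookkeeping at the boundaries of the literal chains: one has to double-check that literals appearing in zero clauses, or in the first or last clause of their chain, are handled so that the incoming or outgoing arc lands on $t$ or begins at the variable vertex without disturbing the counts above. With that check done, the corollary is immediate.
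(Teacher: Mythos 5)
Your proposal is correct and follows exactly the paper's argument: the paper likewise derives the corollary by simply observing that in the reduction of \cref{the:kcostcfddaguni} every variable vertex has degree $4$ (two arcs from $s$, two outgoing literal arcs) and every gadget vertex $c_j^s$, $c_j^t$ has degree $6$, so no modification of the construction is needed. The degree counts you give, including the boundary cases for literals absent from all clauses, match the paper's construction.
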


            For a $\lambda$-uniform $(s,t)$-flow $x$ in an acyclic arc-coloured network, if the colouring function $c$ is injective, the minimum cost of the decomposition is $m$. This is because the decomposition that minimizes the cost consists of $|x|/\lambda$ arc-disjoint path flows. In this case, each arc appears in exactly one path flow of the decomposition.

            The \textsc{KCostCFD} problem remains open for $\lambda$-uniform flows in acyclic networks with $3$ or $4$ colours. \cref{tbl:results} provides a summary of the results discussed for the \textsc{KCostCFD} problem in this work, relating the number of colours to the number of distinct flow values on the arcs of the network. Except for \cref{lem:ksplitmult} and \cref{the:ksplit2v,the:ksplit3v1c} the other results are ours.

                    \begin{table}[!htb]
            \centering
            \begin{tikzpicture}[x=1.5cm, y=1.1cm]
                \footnotesize
                \fill [c08!40] (0,4) rectangle (8,3);
                \fill [c08!40] (0,3) rectangle (2,0);

                \foreach \i in {0, ..., 4} {
                    \draw (0,\i) -- (8,\i);
                    \draw (2*\i,0) -- (2*\i,4);
                }

                \draw (0,4) -- (2,3);

                \foreach \i/\l in {1/$1$, 2/$2$, 3/$\geq 3$} {
                    \node (c\i) at (1.0, 3.5-\i)   {\l};
                    \node (f\i) at (2*\i+1.0, 3.5) {\l};
                }

                \node (lf) at (1.4, 3.7) {Values};
                \node (lc) at (0.7, 3.3) {Colours};

                \foreach \c/\v/\l in {1/1/$\mathcal{P}$,   1/2/$\mathcal{P}^{~*}$,  1/3/$\mathcal{NPC}$,
                                      2/1/$\mathcal{P}$,   2/2/$\mathcal{P}^{~**}$, 2/3/$\mathcal{NPC}$,
                                      3/1/$\mathcal{NPC}$, 3/2/$\mathcal{NPC}$,     3/3/$\mathcal{NPC}$}
                    \node (r\c\v) at (2*\v+1.0, 3.7-\c) {\l};

                \node (lx) at (3.2, -0.3) {\tiny *  The problem remains open for networks with cycles and
                                                    any two distinct flow values on the arcs};
                \node (lx) at (3.0, -0.6) {\tiny ** Only if each colour is associated to a flow value, and
                                                    one of such values divides the other};

                \node (r11) at (3.0, 2.3) {\resizebox{!}{5pt}{{\color{red}Lemma \ref{lem:ksplitmult}}  \cite{Hartman}}};
                \node (r12) at (5.0, 2.3) {\resizebox{!}{5pt}{{\color{red}Theorems \ref{the:ksplit2v}} \cite{Hartman} \color{red} and \ref{the:mincostcfd2v}}};
                \node (r13) at (7.0, 2.3) {\resizebox{!}{5pt}{\color{red}\cref{the:ksplit3v1c}}};
                \node (r11) at (3.0, 1.3) {\resizebox{!}{5pt}{\color{red}\cref{the:mincostcfduni}}};
                \node (r12) at (5.0, 1.3) {\resizebox{!}{5pt}{\color{red}\cref{the:dfcolcstmin2mult}}};
                \node (r13) at (7.0, 1.3) {\resizebox{!}{5pt}{\color{red}\cref{the:dfcolcstk2c3v}}};
                \node (r11) at (3.0, 0.3) {\resizebox{!}{5pt}{\color{red}\cref{the:kcostdfc3c1v}}};
                \node (r12) at (5.0, 0.3) {\resizebox{!}{5pt}{\color{red}\cref{cor:kcostcfd3c}}};
                \node (r13) at (7.0, 0.3) {\resizebox{!}{5pt}{\color{red}\cref{cor:kcostcfd3c}}};
            \end{tikzpicture}
            \caption{Complexity results for the \textsc{KCostCFD} problem.}
            \label{tbl:results}
        \end{table}
            \FloatBarrier

    \section{Concluding Remarks}

        In this work, we proposed and studied the problem of decomposing a given $(s,t)$-flow $x$ in an arc-coloured network $\Net$ into $(s,t)$-path flows with a minimum cost, where the cost is defined as the sum of the costs of the paths, and the cost of each path is given by its the number of distinct colours. Among the real world applications for this problem, we may mention, for example, telecommunication networks and multimodal transportation systems. The colours may represent risks or different means of transportation.

        We showed that this problem is difficult to solve for networks with a small number of colours, even for uniform flows on networks in general with three colours and on acyclic networks with at least five colours.

        As future works, one should continue investigating the \textsc{MinCostCFD} restricted to the following cases:

        \begin{itemize}
            \item the network has exactly two colours and each colour is associated to a flow value, and the smallest value does not divide the largest;
            \item the network has exactly two colours and two flow values and there is no association between colour and flow value;
            \item uniform flows in acyclic networks with three or four colours.
        \end{itemize}

        Another natural research line when faced to an $\NP$-complete problem is the search for an approximate algorithm with a good approximation factor to the problem.

        We should also investigate three variations of the problem of decomposing a flow $x$ in an arc-coloured network into path flows $x^1, \ldots, x^\ell$, where each $x^i$ is sent along a path $P_i$, with the following objectives:
        \begin{itemize}
            \item minimize $\textstyle \sum_{i=1}^{\ell} \sum_{j \hspace{.5mm} \in \hspace{.5mm} colours(P_i)} span(j,P_i)$;
            \item minimize $\textstyle \sum_{i=1}^{\ell} n_c(P_i)^2$;
            \item maximize $\textstyle \sum_{i=1}^{\ell} \frac{|x^i|}{n_c(P_i)}$.
        \end{itemize}

        In the first approach, the cost of a path is given by the sum of the \emph{span} of each colour in it. Consider, for example, a multimodal transport system. Finding a path flow with fewer colours corresponds to creating a route using fewer types of transportation modes. Additionally, it may not be desirable to switch between modes of transport constantly. The \emph{span} of each colour along the path corresponds to the number of times it will be necessary to take the corresponding transportation mode along that path.

        In the second approach, we want to obtain a decomposition in which the number of colours of the paths are as close as possible. Thinking of the colours as risks, we want paths with almost the same number of risks. For instance, in the original problem we may have two decompositions of a flow into two path flows with a cost of $10$, one that the costs of the paths are $1$ and $9$, and other that the costs are $4$ and $6$. In this case, the second decomposition has a lower cost, since $4^2 + 6^2 = 52 < 1^2 + 9^2 = 82$.

        In the last approach, we take into account both a quantitative and a qualitative aspect (the value and the colours) of each path flow. We are interested in a decomposition that sends large flow values along paths with fewer colours. Once again, thinking of the colours as risks, we want to find safer routes for sending large amount of commodities.

    \bibliographystyle{abbrv}
    \bibliography{references}
\end{document}